\documentclass[STIX1COL]{WileyNJD-v2}

\usepackage{epsfig} 
\usepackage{amsmath,amssymb,amsfonts}
\usepackage{color}
\usepackage{balance}
\usepackage{nicefrac}
\usepackage{booktabs,tabularx}
\usepackage{multirow}
\usepackage{mathtools}
\usepackage{pdfrender}
\usepackage{trfsigns}

\usepackage{centernot}
\usepackage{color}
\usepackage[acronym]{glossaries}
\usepackage{glossaries-prefix}
\usepackage{graphicx}
\usepackage{tabu}
\usepackage[official]{eurosym}
\usepackage{subcaption}
\usepackage{ulem}

\graphicspath{{figures/}}

\newcommand{\norm}[1]{\left\lVert#1\right\rVert}

\newcommand{\defeq}{\coloneqq}
\newcommand{\rdefeq}{\eqqcolon}

\newcommand{\set}[2]{\left\{ #1\ \left| \ #2 \right. \right\}}

\newcommand{\R}{\mathbb{R}}

\newcommand{\B}{\mathbb{B}}

\newcommand{\N}{\mathbb{N}}

\newcommand{\bbS}{\mathbb{S}}
\newcommand{\mc}[1]{\mathcal{#1}}

\newcommand{\bs}{\boldsymbol}
\newcommand{\bsone}{\boldsymbol{1}}

\newcommand{\col}{\mathrm{col}}
\newcommand{\vertex}{\mathrm{vert}}
\newcommand{\rank}{\mathrm{rank}}

\newcommand{\normaltext}[1]{\textnormal{#1}}

\newcommand{\continuanceref}{}

\newtheorem{example}{Example}
\newtheorem{standing}{Standing Assumption}

\newenvironment{continuance}[1]
{\renewcommand\continuanceref{\ref{#1}}\continuancex[Cont'd]}
{\endcontinuancex}

\usepackage{xcolor,calc}


\articletype{Research Article}%

\received{26 April 2016}
\revised{6 June 2016}
\accepted{6 June 2016}

\raggedbottom

\makeglossaries
\newacronym[
prefixfirst={a\ },
prefix={an\ }
]{LP}{LP}{linear program}
\newacronym[
prefixfirst={a\ },
prefix={an\ }
]{LTI}{LTI}{linear time-invariant}
\newacronym{CLF}{CLF}{control Lyapunov function}
\newacronym{MPC}{MPC}{model predictive control}
\newacronym{eMPC}{eMPC}{explicit model predictive control}
\newacronym[
prefixfirst={a\ },
prefix={an\ }
]{LQR}{LQR}{linear quadratic regulator}
\newacronym{mp-QP}{mp-QP}{multi-parametric quadratic program}
\newacronym{mp-LP}{mp-LP}{multi-parametric linear program}
\newacronym{QP}{QP}{quadratic program}
\newacronym[
prefixfirst={a\ },
prefix={an\ }
]{LICQ}{LICQ}{linear independence constraint qualification}
\newacronym[
prefixfirst={a\ },
prefix={an\ }
]{LMI}{LMI}{linear matrix inequality}
\newacronym{PWA}{PWA}{piecewise-affine}
\newacronym{PWA-NN}{PWA-NN}{piecewise-affine neural network}
\newacronym{SDP}{SDP}{semidefinite program}
\newacronym{MIP}{MIP}{mixed-integer program}
\newacronym{ReLU}{ReLU}{rectified linear unit}
\newacronym{PL-NN}{PL-NN}{piecewise linear neural network}
\newacronym{NN}{NN}{neural network}
\newacronym{iid}{i.i.d.}{independent and identically distributed}
\newacronym{wrt}{w.r.t.}{with respect to}
\newacronym[
prefixfirst={a\ },
prefix={an\ }
]{MI}{MI}{mixed-integer}
\newacronym[
prefixfirst={a\ },
prefix={an\ }
]{MILP}{MILP}{mixed-integer linear program}
\newacronym{LIQC}{LIQC}{linear independence constraint qualification}
\newacronym{KKT}{KKT}{Karush-Kuhn-Tucker}
\newacronym{ISS}{ISS}{input-to-state stable}

\begin{document}

\title{Robust stabilization of polytopic systems via fast and reliable neural network-based approximations}

\author[1]{Filippo Fabiani*}

\author[2]{Paul J. Goulart}

\authormark{FABIANI \textsc{et al}}

\address[1]{IMT School for Advanced Studies Lucca, Italy}

\address[2]{Department of Engineering Science, University of Oxford, United Kingdom}

\corres{*IMT School for Advanced Studies Lucca, Piazza S. Francesco 19, 55100, Lucca, Italy \email{filippo.fabiani@imtlucca.it}}

\abstract[Abstract]{We consider the design of fast and reliable \gls{NN}-based approximations of traditional stabilizing controllers for linear systems with polytopic uncertainty, including control laws with variable structure and those based on a (minimal) selection policy. Building upon recent approaches for the design of reliable control surrogates with guaranteed structural properties, we develop a systematic procedure to certify the closed-loop stability and performance of a linear uncertain system when a trained \gls{ReLU}-based approximation replaces such traditional controllers. 
First, we provide a sufficient condition, which involves the worst-case approximation error between \gls{ReLU}-based and traditional controller-based state-to-input mappings, {ensuring that the system is ultimately bounded within a set with adjustable size and convergence rate}. Then, we develop an offline, mixed-integer optimization-based method that allows us to compute that quantity exactly.
}

\keywords{Robust control, Neural networks, Uncertain systems, Mixed-integer linear optimization}

\maketitle
\normalem

\section{Introduction}\label{sec:intro}
Embodying a suitable compromise between model expressiveness and mathematical tractability, polytopic linear systems represent widely employed modelling paradigms able to capture structural uncertainties and parameter-varying dynamics \cite{MurJoh97}. Their analysis and control design, however, is frequently complicated by the presence of operational and physical constraints acting both on the control and state variables, thus possibly compromising the closed-loop performance of the system at hand.

\subsection{Traditional controllers for polytopic systems: advantages and disadvantages}
Among the available approaches, the concept of \emph{control invariant set} is one of the most exploited historically, since it ensures the existence of some feedback law able to steer the closed-loop trajectories of the uncertain system {within a prescribed state set} \cite{gutman1986admissible,blanchini1994ultimate,blanchini1995constrained,leyva2021stabilization}. This is traditionally achieved by associating a \gls{CLF} with the invariant set design, which for polytopic systems has been proven to be \emph{universal}, namely the stabilization of the linear uncertain system and the existence of a polyhedral \gls{CLF} can be used interchangeably \cite{blanchini1995nonquadratic}. With a specific focus on discrete-time polytopic systems, an admissible control policy that actually makes a polyhedral \gls{CLF} a suitable Lyapunov candidate for the closed-loop system is typically synthesized in two ways: through a \emph{variable structure} \cite{gutman1986admissible,nguyen2013implicit,nguyen2016explicit}, or a \emph{(minimal) selection-based} controller \cite{aubin2012differential}. We will also refer to these policies as traditional stabilizing controllers for linear uncertain systems.

Once fixed feasible control inputs at the vertices of the invariant set have been computed, a variable structure controller either takes a convex combination of those values by exploiting the vertex reconstruction of any state belonging to such a set, or coincides with a purely linear gain stemming from a \emph{triangulation}, i.e., a simplicial partition \cite{de2010triangulations}, of the underlying set. These methods therefore require one to solve a \gls{LP} online or to generate a lookup table to identify the region in which the current state resides. If the simplicial partition-based implementation is considered, then one has also to account for the complexity of the resulting invariant set, which is typically high \cite{blanchini1994ultimate,blanchini1995constrained,pluymers2005efficient,blanco2010efficient,athanasopoulos2010invariant,blanchini2015set}.  These methods can therefore require significant memory to store the vectors and/or matrices describing every simplicial partition and associated linear control gain. As a common drawback affecting both the implementations, however, fixing the input values at the vertices may result in poor control performance for the stabilization task.
A more sophisticated control method coincides with the selection-based policy. By requiring the online resolution of a nonlinear optimization problem, parametric in the current measured state, this method directly enforces a certain degree of contraction possessed by the \gls{CLF} at every control step.  While solving a numerical optimization problem online provides flexibility and performance guarantees, the real-time computational efforts required complicate its application in polytopic linear systems characterized by high sampling rates.

\subsection{Proposed approach and related work}
In this paper we aim to make the aforementioned control strategies more attractive from an implementation point of view by approximating their state-to-input behaviours through a \gls{NN} with \glspl{ReLU} \cite{hagan1997neural,Goodfellow-et-al-2016} resulting in a (continuous) \gls{PWA-NN} controller. The real-time evaluation of \gls{ReLU}-\glspl{NN} is known to be  computationally very inexpensive \cite{zhang2020near,schindler2020real}, thus allowing one to overcome the aforementioned practical limitations at the price of a possibly demanding offline training process.
Nevertheless, neural network-based methods have historically been used in uncertain system control design as a proxy for suboptimal control actions \cite{sznaier1992analog,hayakawa2008neural,liu2013neural}, robust feedback policies based on pole-placement \cite{wang1996multilayer,le2013robust}, or to solve frequently encountered linear matrix inequalities \cite{cheng2009simplified,guo2013zhang}. More recently, instead, an algorithm based on output range analysis methodologies was devised in \cite{karg2022guaranteed} to verify if the closed-loop operation of a linear uncertain system with a neural network controller guarantees safety for a set of initial conditions. The typically nonlinear and large-scale structure characterizing \glspl{NN}, however, generally complicate their analysis \cite{hagan1997neural,Goodfellow-et-al-2016}. Therefore, available approaches, despite commonly working well in practice, generally come without formal certificates of closed-loop stability and performance.

\subsection{Summary of contributions}
Capitalizing on the methodology proposed in \cite{fabiani2021reliably}, we take an optimization-based approach to develop analytical tools fulfilling such quest for theoretical guarantees of \gls{ReLU}-based approximations of traditional stabilizing controllers for polytopic systems. We develop a purely offline method based on the systematic construction and solution of a \gls{MILP} that allow us to certify the stability and performance of the closed-loop system when a trained \gls{ReLU}-based approximation replaces either the variable structure, or the selection-based controller.
The contributions made by the paper can hence be summarized as follows:

\begin{enumerate}
	\item We show how to guarantee the {(uniform, in a set) ultimate boundedness property \cite{blanchini1994ultimate}} of a discrete-time polytopic system when the \gls{ReLU} approximation replaces a traditional stabilizing controller. Specifically, by focusing on the approximation error between \gls{NN}-based and traditional controller-based state-to-input mappings, we establish a sufficient conditions involving the worst-case approximation error;
	\item While the variable structure controller amounts to a continuous \gls{PWA} mapping by construction, we characterize the geometric properties of the selection-based controller. Specifically, for the resulting nonlinear multi-parametric program, we show that:
	\begin{itemize}
		\item It produces a state-to-input mapping that also enjoys a continuos \gls{PWA} structure;
		\item We formulate a \gls{MILP} to compute both the output of the mapping, for a given state, and the associated Lipschitz constant exactly;
	\end{itemize}
	Note that these results are of standalone interest since, to the best of our knowledge, they are not directly available from the robust control of polytopic systems literature.
	\item We hence show that the approximation problem considered  is compatible with existing results from the machine learning literature on computing key quantities of a trained \gls{ReLU} network. We thus end up with a sufficient condition involving the optimal value of \pgls{MILP} to certify the reliability of the \gls{ReLU}-based controller;
	\item We discuss several practical aspects related with the training of the proposed \gls{ReLU}-based controllers, also providing bounds on the complexity they should have to represent a \gls{PWA} mapping exactly.
\end{enumerate}


While the high-level idea we adopt here qualitatively mirrors the one in \cite{fabiani2021reliably}, we stress that the results in this paper do not follow by direct application of those already available in the literature. Compared to \cite{fabiani2021reliably}, indeed, which was tailored for deterministic systems under the action of model predictive control policies, dealing with polytopic systems requires one to employ a different mathematical toolset and solution methodology, both to identify a sufficient condition to assess the training quality of a \gls{ReLU} network in mimicking traditional controllers (point 1 of the list above), and in characterizing the geometrical properties of these latter (especially the selection-based controller -- point 2). Putting together these two ingredients then allows us to recover a formulation that is compatible with the one available in the machine learning literature for assessing the training quality of a \gls{ReLU}-based controller (point 3). Only at this point we will finally make use of a technical result developed in \cite{fabiani2021reliably} (specifically, \cite[Prop.~5.2]{fabiani2021reliably}, showing that the norm of a matrix whose entries are affine in the state variable can be computed through \pgls{MILP}).

\subsection*{Notation}
$\N$, $\R$ and $\R_{\geq 0}$ denote the set of natural, real and nonnegative real numbers, respectively, $\N_0 \defeq \N \cup \{0\}$, $\N_\infty \defeq \N \cup \{+\infty\}$, and $\mathbb{B} \defeq \{0,1\}$. $\bbS^{n}$ is the space of $n \times n$ symmetric matrices and $\bbS_{\succ 0}^{n}$ ($\bbS_{\succcurlyeq 0}^{n}$) is the cone of positive (semi-)definite matrices. Bold $\bsone$ ($\bs{0}$) is a vector of ones (zeros). Given a matrix $A \in \R^{m \times n}$, $a_{i,j}$ is its $(i,j)$ entry, $A_{:,j}$ (resp., $A_{i,:}$) its $j$-th column ($i$-th row) and $A^\top$  its transpose. For $v \in \R^m$ and any $\mc{I} \subseteq \{1, \ldots, m\}$, $A_{\mc{I}}$ (resp., $v_{\mc{I}}$) denotes the submatrix (subvector) obtained by selecting the rows (elements) indicated in $\mc{I}$. For $A \in \bbS_{\succ 0}^{n}$, $\| v \|_A \defeq \sqrt{ v^\top A v }$. 	$\col(\cdot)$ stacks its arguments in column vectors or matrices of compatible dimensions. $A \otimes B$ is the Kronecker product of $A$ and $B$.{ We sometimes use $x(k+1)$, $k \in \N_0$, as opposed to $x^+$, to make time dependence explicit when describing the state evolution of discrete-time uncertain systems.}

We denote the norm over matrices in $\R^{m \times n}$ induced by an arbitrary norm $\|\cdot\|_\alpha$ over both $\R^n$ and $\R^m$ by $\|A\|_{\alpha} \defeq \textrm{sup}_{\|x\|_\alpha \leq 1} \ \|A x\|_\alpha = \textrm{sup}_{x \neq 0} \ \|A x\|_\alpha / \|x\|_\alpha$. For a set $\mc{S} \subseteq \R^n$, $|\mc{S}|$ is its cardinality, $\textrm{int}(\mc{S})$ its interior, $\textrm{cone}(\mc S)$ its conic hull and  $\vertex(\mc{S})$ its set of vertices (for polyhedral $\mc{S}$).
We call $\mc{S}$ a (polyhedral) C-set if it is compact, convex (polyhedral)  and $0 \in \textrm{int}(\mc{S})$, and we use the term polyhedral C-set or C-polytope interchangeably. Given a mapping $F : \R^n \to \R^m$, the local $\alpha$-Lipschitz constant over $\mc{S} \subseteq \R^n$ is denoted $\mc{L}_{\alpha}(F,\mc{S})$.  

\glsresetall

\section{On the approximation of traditional controllers for polytopic systems}\label{sec:problem_description}
We consider the control of a constrained, discrete-time linear system affected by polytopic uncertainty characterized by the following dynamics
\begin{equation}\label{eq:polytopic}
	x^+ = A(w) x + B(w) u.
\end{equation}
The state $x$ and input $u$ are constrained to sets $\mc{X} \subseteq \R^n$ and $\mc{U} \subseteq \R^m$, respectively.
The system dynamics are perturbed by a time varying exogenous signal $w \in \mc{W} \subseteq \R^M$, which determines at each time step the matrices $A(w) \defeq \sum_{i \in \mc{M}} w_i A_{i}$ and $B(w) \defeq \sum_{i \in \mc{M}} w_i B_{i}$, composed of generator matrices $\{(A_i,B_i)_{i \in \mc{M}}\}$ with $A_i \in \R^{n \times n}$ and $B_i \in \R^{n \times m}$, for all $i \in \mc{M} \coloneqq \{1, \ldots, M\}$.
We will assume throughout that the sets $\mc{X}$ and $\mc{U}$ are bounded, polyhedral and contain the origin, and that $\mathcal{W}$ is polyhedral{, specifically, an $M$-simplex,} with a known representation in vertex form $\mc W \defeq \set{w \in \R^M}{\bsone ^\top w = 1, w\ge 0}$.

The problem of designing a stabilizing controller for \eqref{eq:polytopic} has been well-studied, and a common approach (as in \cite{blanchini1994ultimate,blanchini1995constrained,pluymers2005efficient,blanco2010efficient,athanasopoulos2010invariant}) is to construct some control invariant set $\mc S \subseteq \mc X$ equipped with a feasible controller $\Phi : \R^n \to \mc{U}$ and \gls{CLF} $\Psi : \R^n \to \R$ \cite[Def.~2.29]{blanchini2015set} so that
\begin{equation}\label{eq:CLFdynamics}
	\Psi(A(w)x + B(w)\Phi(x)) \le \lambda \Psi(x), \text{ for all } w\in\mc W, x \in \mc S,
\end{equation}
where the value $\lambda \in (0,1)$ is the \emph{contraction factor} and determines the rate of convergence of the system to the origin.    We will assume throughout that a control-invariant, polyhedral C-set $\mc{S} \subseteq \mc{X}$ is available, with minimal representation
$$
\mc{S} \defeq \{x \in \R^n \mid F x \leq \bsone\}
$$
for $F \in \R^{p \times n}$, and that the set $\mc S$ and polyhedral function $\Psi(x) \defeq \|F x\|_\infty$ together satisfy \eqref{eq:CLFdynamics} in combination with some controller $\Phi(\cdot)$.   Specifically, the function $\Psi(\cdot)$ is the \emph{Minkowski} or \emph{gauge function} of $\mc S$ and can be written equivalently as
$$\Psi(x) = \underset{t \geq 0}{\textrm{min}} \ \{t \mid Fx \leq t \bsone\} = \underset{{j \in \mc{P}}}{\textrm{max}} \ \{F_{j,:} x\},
$$
with $\mc{P} \defeq \{1,\ldots,p\}$. We will denote the $\beta$-sublevel set of the gauge function $\Psi(\cdot)$ simply as $\beta \mc S = \set{x \in \R^n}{\Psi(x) \le \beta}$. Note that $\beta\le1$ to identify an actual sublevel set, since $\Psi(x) \in [0,1]$ for all $x\in\mc S $.

We will consider the problem of \emph{approximating} such a controller $\Phi(\cdot)$ using a \gls{ReLU}-based neural network, with the aim of reducing overall computational cost while retaining desirable stability and invariance properties.   We first describe briefly several common methods for designing the controller $\Phi(\cdot)$ given $\mc S$.

\subsection{Vertex-based control laws}\label{subsec:VertexController}
Methods for computing a polyhedral invariant C-set $\mc{S}$ typically associate with each vertex $x_v \in \vertex(\mc{S})$ some feasible action $u_v \in \mc{U}$ such that $A(w)x_v + B(w)u_v \in \lambda \mc S$.   Designing a controller over the whole of $\mc S$ is then a matter of exploiting the known control actions at the vertices.   Since $\mc S$ is polytopic it can be partitioned into a collection of simplices $\mc{T}^{(h)}$, each formed by $n$ vertices $\{x^{(h)}_{v}\}^n_{v = 1}$ and the origin (i.e.\ as a complete \emph{polyhedral fan} \cite[Def.~2.1.7]{de2010triangulations} -- see also Fig.~\ref{fig:vert_simplices} for an example).   If we define
$X^{(h)} \defeq [x^{(h)}_{1} \ x^{(h)}_{2} \ \cdots \ x^{(h)}_{n}]$ from the collection of non-zero vertices of the $h$-th simplex and $U^{(h)} \defeq [u_{1}^{(h)} \ u_{2}^{(h)} \ \cdots \ u_{n}^{(h)}]$ from the associated (feasible) control actions, then
\begin{equation}\label{eq:vertex_law_1}
	\Phi(x) \defeq U^{(h)}(X^{(h)})^{-1}x \ \text{ for all } x \in \mc T^{(h)},
\end{equation}
is stabilizing and satisfies \eqref{eq:CLFdynamics}. Determining the simplex in which any given state belongs can be done online, e.g., via a lookup table.

However, the complexity of the controller $\Phi(\cdot)$ in this case can be much greater than that of the generating set function $\mc S$, which is already complex in most cases \cite{blanchini2015set} -- see also Fig.~\ref{fig:th_PWA} in \S\ref{sec:geometric_char} for some numerical examples. In fact, for a state dimension larger than three choosing $n$ vertices on each facet of $\mc{S}$ to perform a triangulation with the origin leads to a total number of simplices that can be much larger than $p$ (i.e., the number of facets).	From \cite[\S 8.4]{de2010triangulations} it follows indeed that the \emph{size of the triangulation}, i.e., the number of simplices defining the partition, is bounded between $|\mc{V}_0|-n$ and $\mc{O}(|\mc{V}_0|^{\llcorner\nicefrac{n+1}{2}\lrcorner})$, where $\mc{V}_0 \defeq \vertex(\mc{S}) \cup \{0\}$.
This may require significant memory to store the matrices describing every simplicial partition $\mc{T}^{(h)}$ and associated control gain $\Gamma^{(h)}$, as well as requiring significant processing power for online evaluation.

\begin{figure}[t!]
	\centering
	\includegraphics[width=.6\columnwidth]{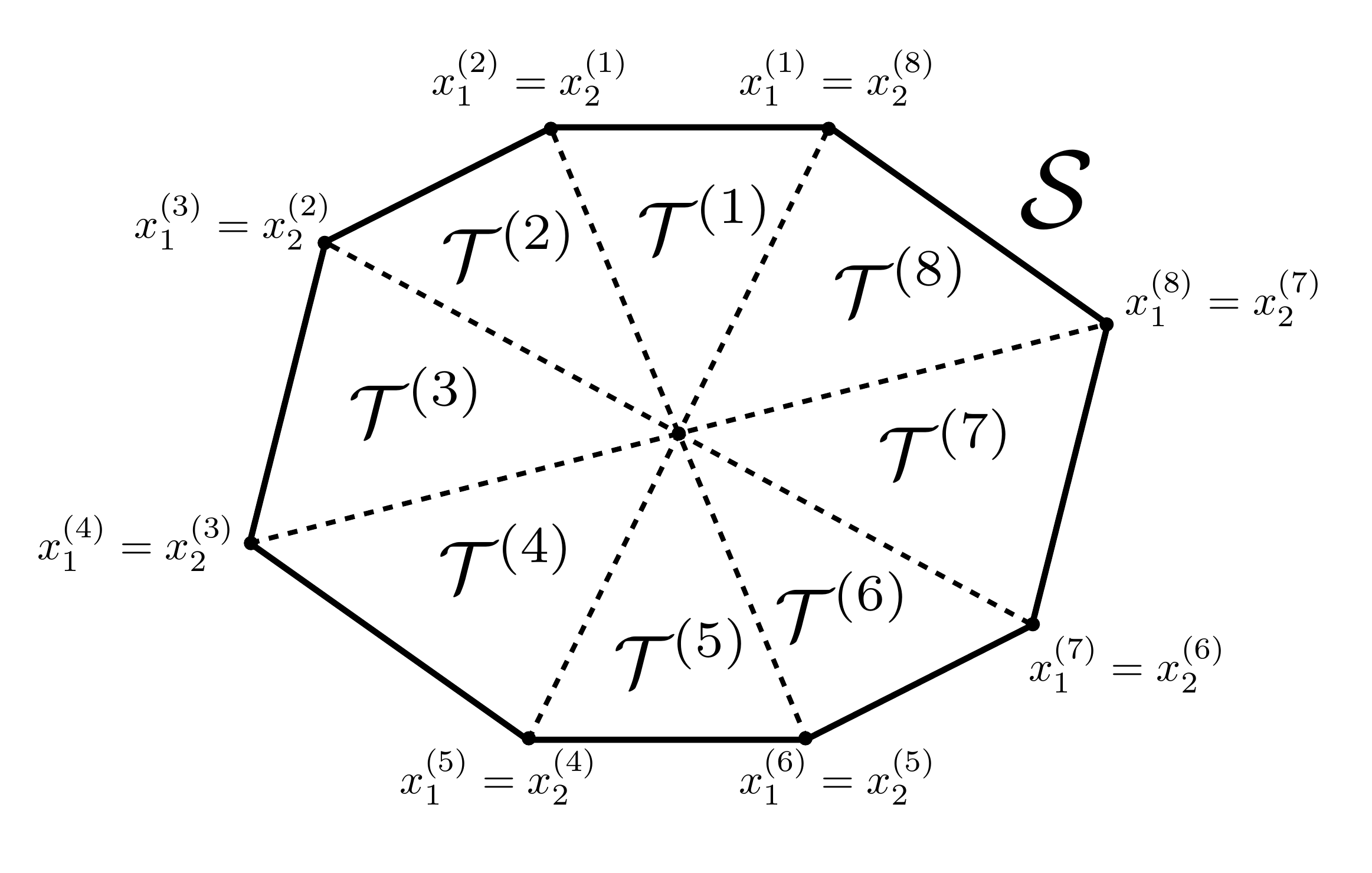}
	\caption{Two-dimensional triangulation of a symmetric C-polytope $\mc{S}$ into a sequence of eight simplices, $\mc{T}^{(h)}$, each one formed by two vertices $\{x^{(h)}_{v}\}^2_{v = 1}$, and the origin.}
	\label{fig:vert_simplices}
\end{figure}

A related approach is to instead define the controller $\Phi(\cdot)$ as
\begin{equation}\label{eq:vertex_law_2}
	\Phi(x) = \sum_{v =1}^{V} \gamma^\star_v(x) u_v,
\end{equation}
where $\gamma^\star(\cdot)$ solves
$$
\gamma^\star(x) = \underset{\gamma \in [0,1]^{V}}{\textrm{argmin}} \
\set{ \tfrac{1}{2}\|\gamma\|^2_2}{\sum\nolimits_{v = 1}^{V}  \gamma_v x_v = x,~\bsone^\top \gamma = 1},
$$
which amounts to a \gls{QP}, with $V \defeq |\vertex(\mc{S})|$.
Note that although the controllers \eqref{eq:vertex_law_1} and \eqref{eq:vertex_law_2} are both computed based on the vertex set $\vertex(\mc{S})$ and associated control inputs $u_v \in \mc{U}$, their values may differ since the optimal solution $\gamma^\star(x)$ does not necessarily return a convex combination of the vertices defining the simplex in which the current $x$ resides. In either case, however, the state-to-input mapping $x \mapsto \Phi(x)$ will be continuous and \gls{PWA}.

\subsection{Selection-based controller}\label{subsec:MinimalSelectionControl}
Optimization-based controllers more flexible than that of \eqref{eq:vertex_law_2} can be defined that account for
some type of control performance specifications.  To find a \emph{selection-based controller} \cite{aubin2012differential} directly enforcing the contraction condition \eqref{eq:CLFdynamics}, one can solve the following optimization problem online:
\begin{equation}\label{eq:online_control_1}
	\begin{aligned}
		\Phi(x) \defeq~ &\underset{v \in \mc{U}}{\textrm{argmin}} && \tfrac{1}{2} v^\top H v + x^\top P v\\
		& \textrm{ ~~s.t.} && \Psi(A_i x + B_i v) \leq  \lambda \Psi(x), \text{ for all } i \in \mc{M},
	\end{aligned}
\end{equation}
which turns into a \emph{minimal selection policy} in case $P=0$.
Under standard assumptions the problem \eqref{eq:online_control_1} will always be feasible  given the definition of a \gls{CLF}, and its optimal solution will be unique.  Unlike \eqref{eq:vertex_law_2}, however, it is not immediately obvious whether the controller $\Phi(\cdot)$ defined by \eqref{eq:online_control_1} will be \gls{PWA}.  We will prove in \S\ref{sec:geometric_char} that this remains the case.

As in the case of \eqref{eq:vertex_law_2}, the optimization-based controller $\Phi(\cdot)$ defined in \eqref{eq:online_control_1} is stabilizing but may require a prohibitive amount of computation for use in real-time applications with fast dynamics.   The practical limitations of characterizing a controller $\Phi(\cdot)$ designed by any of these methods lead us to the question of approximation -- is it possible to approximate these controllers with some proxy of substantially lower computational and storage cost while maintaining stability and a certain degree of performance?
To address this question, we focus on controllers implemented using \gls{ReLU} neural networks \cite{Goodfellow-et-al-2016}, which provide a natural means for approximating a continuous \gls{PWA} function $\Phi(\cdot)$ since the output mapping of such a network is itself \gls{PWA} continuous \cite{montufar2014number}.

Our approach parallels the development in \cite{fabiani2021reliably}, where we addressed the approximation of model predictive control policies for deterministic systems.   We ask whether the training of a \gls{ReLU}-based neural network to approximate a controller $\Phi(\cdot)$ has been sufficient to ensure that the network's output function $\Phi_{\textrm{NN}}(\cdot)$ will be stabilizing for \eqref{eq:polytopic}.  Our approach is based on the offline characterization of the error function $e(x) \defeq \Phi_{\textrm{NN}}(x) - \Phi(x)$ using \gls{MI} optimization, where $\Phi(\cdot)$ is a continuous \gls{PWA} law defined using any of \eqref{eq:vertex_law_1}, \eqref{eq:vertex_law_2} or \eqref{eq:online_control_1} (as we show in \S \ref{sec:geometric_char}). 
{We will obtain a condition on the optimal value of \pgls{MILP} sufficient to assure that the closed-loop system \eqref{eq:polytopic} under the action of $\Phi_\textrm{NN}(\cdot)$ is (uniformly) ultimately bounded within a set of adjustable size and (exponential) convergence rate, according to the following notion:

\begin{definition}\textup{(Uniform ultimate boundedness \cite[Def.~2.4]{blanchini1994ultimate})}\label{def:uub}
	Let $\mathcal{C}$ be a C-set, neighbourhood of the origin. The system in \eqref{eq:polytopic} with control $u(x)=\psi(x)$ is said to be ultimately bounded in $\mathcal{C}$, uniformly in $\mc S$, if for every initial condition $x(0)=x_0 \in \mc S$, there exists $K(x_0)$ such that, for all $k \geq K(x_0)$ and $w(k)\in \mc W$, $x(k) \in \mathcal{C}$.
	\hfill$\square$
\end{definition}
}

\section{Technical preliminaries}\label{sec:preliminaries}
We first recall some properties of continuous \GLS{PWA} functions and related properties of \gls{ReLU} networks:

\begin{definition}\textup{(Piecewise-affine mapping \cite[Def.~2.47]{rockafellar2009variational})}\label{def:pwa}
	$F:\mc{F} \to \R^m$ is a continuous \emph{piecewise-affine} mapping on the polyhedral domain $\mc{F} \subseteq \R^n$ if
	\begin{itemize}
		\item[(i)] $\mc{F}$ can be partitioned on a finite union of $R$ disjoint polyhedral sets, i.e. $\mc{F} \defeq \cup_{i =1}^R \mc{F}_i$ with $\normaltext{\textrm{int}}(\mc{F}_i) \cap \normaltext{\textrm{int}}(\mc{F}_j) = \emptyset$, for all $i \neq j$;
		\item[(ii)] $F(\cdot)$ is affine on each of the sets $\mc{F}_i$, i.e.,\
		$F(x) = F_i(x) \defeq G_i x + g_i$, $G_i \in \R^{m \times n}$, $g_i \in \R^m$, $\forall x \in \mc{F}_i$;
		{\item[(iii)]  It satisfies $G_i x + g_i = G_j x + g_j$ for all $x \in \mc F_i \cap \mc F_j$.}
	\end{itemize}
	Any mapping $F_i : \mc{F}_i \to \R^m$ is called a \emph{component} of $F(\cdot)$.
	\hfill$\square$
\end{definition}
From Definition~\ref{def:pwa} it is evident that any continuous \gls{PWA} mapping on $\mc{F}$ is Lipschitz continuous.  Recalling that the local $\alpha$\emph{-Lipschitz constant} of a mapping $F : \R^n \to \R^m$ on a set $\mc{F} \subseteq \R^n$ is
\[
\mc{L}_{\alpha}(F,\mc{F}) \defeq \underset{x \neq y \in \mc{F}}{\normaltext{\textrm{sup}}} \ \frac{\|F(x) - F(y)\|_{\alpha}}{\|x - y\|_{\alpha}},
\]
then it is easy to show that for any norm $\|\cdot\|_\alpha$ defined over both $\R^n$ and $\R^m$, a continuous piecewise affine mapping defined as in Definition~\ref{def:pwa} has Lipschitz constant
\begin{equation}\label{eq:Lipschitz_const_PWA}
	\mc{L}_{\alpha}(F,\mc{F}) = \underset{i \in \{1,\ldots,R\}}{\textrm{max}} \ \|G_i^\top\|_{\alpha}.
\end{equation}
\subsection{\gls{ReLU} networks are continuous \gls{PWA} mappings}\label{subsec:ReLU}
An $L$-layered, feedforward, fully-connected \gls{ReLU} network that defines a mapping $\Phi_{\textrm{NN}}: \R^n \to \R^{m}$ can be described by the following recursive equations across layers \cite{hagan1997neural}:
\begin{equation}\label{eq:RELU_NN}
	\left\{
	\begin{aligned}
		& x^{0} = x,\\
		& x^{j +1} = \textrm{max}(W^j x^j + b^j, 0), \ j \in \{0, \ldots, L-1\},\\
		& \Phi_{\textrm{NN}}(x) = W^L x^L + b^L,
	\end{aligned}
	\right.
\end{equation}
where $x^{0} = x \in \R^{n_0}$, $n_0 = n$, is the input to the network, $W^{j} \in \R^{n_{j + 1} \times n_j}$ and $b^j \in \R^{n_{j+1}}$ are the weight matrix and bias vector of the $(j+1)$-th layer, respectively. Note that this latter are typically defined during some offline training phase.  The width of the $(j+1)$-th layer, i.e.\ the number of neurons in the layer,  is therefore $n_{j + 1}$, with the $k$-th neuron in a given layer implementing the scalar function (also called  \emph{activation function}) $x^j \to \textrm{max} (W^{j}_{k,:} x^j + b^j_k,0)$.  The total number of neurons is thus $N \defeq \sum_{j = 1}^{L} n_{j} + m$, since $n_{L+1} = m$.

The function $\Phi_{\textrm{NN}}(\cdot)$ is a continuous \gls{PWA} mapping \cite{montufar2014number}, but direct computation of its Lipschitz constant using \eqref{eq:Lipschitz_const_PWA} is generally not practicable since it would require the explicit description of its \gls{PWA} representation as in Definition~\ref{def:pwa}.    However, it has been shown that under very mild assumptions the Lipschitz constant can be computed exactly through the solution of an \gls{MILP} \cite[Th.~5]{jordan2020exactly}.

\subsection{{On the ultimate boundedness} of polytopic systems with neural network controllers}\label{subsec:stability}
We now investigate the stability of the linear uncertain system in \eqref{eq:polytopic} with \gls{PWA-NN} controller $u(x) = \Phi_{\textrm{NN}}(x)$, defined as in \eqref{eq:RELU_NN}, trained to approximate some stabilizing controller $\Phi(\cdot)$ designed according to any of the techniques sketched in \S\ref{sec:problem_description}.    Define the approximation error as
$ e(x) = \Phi(x) - \Phi_{\textrm{NN}}(x)$, so that our problem amounts to verifying that the perturbed system
\begin{equation}\label{eq:perturbed_dyn}
	x^+ = A(w) x + B(w) \Phi_{\textrm{NN}}(x) = A(w) x + B(w) \Phi(x) + d(x,w),
\end{equation}
is robustly stable for all $w \in \mc{W}$, with $d(x,w) \defeq B(w) e(x)$.

Let us now fix some $b \in (0,1)$, whose exact value we will specify in the sequel, and define some values relating to $e(\cdot)$ over the whole of $\mc S$ and on the $b$-sublevel set $b \mc S$ of $\Psi(\cdot)$.
Given some $\alpha$-norm, define $\bar e_\alpha \defeq \textrm{max}_{x \in \mc S} \ \norm{e(x)}_\alpha$, $\alpha \in \{1, \infty\}$, as the worst case approximation error over $\mc S$.   Recall that $\mc{L}_\alpha(e, b \mc S)$ denotes the Lipschitz constant of $e(\cdot)$ over the sublevel set $b \mc S$.    We will assume that the training phase of our neural network controller enforces $e(0) = 0$, so that the latter quantity satisfies
\[
\|e(x)\|_\alpha \leq \mc{L}_\alpha(e, b \mc S) \|x\|_\alpha, \text{ for all } x \in b \mc S.
\]
In \S \ref{sec:certificates}, we will show how these values can be computed exactly using \gls{MI} optimization.

We can now derive a computable upper bound on a key quantity that allows us to preserve the stability and performance of the closed-loop system in \eqref{eq:polytopic} with $u(x) = \Phi_{\textrm{NN}}(x)$. Specifically, Proposition~\ref{prop:NN_stability} establishes that the trajectories of the perturbed system in \eqref{eq:perturbed_dyn} converge exponentially fast to a neighbourhood of the origin when the maximal approximation error $\bar{e}_{\alpha}$ is sufficiently small, thereby proving uniform ultimate boundedness in the spirit of Definition~\ref{def:uub}:

{
\begin{proposition}\label{prop:NN_stability}
		Suppose that $(b,\rho)$ are chosen so that $ b \in (0,1)$ and $\rho \in (\lambda, 1)$.  There exists a computable parameter $\zeta > 0$ such that, if $\bar{e}_{\alpha}  < \zeta$, then the system in \eqref{eq:polytopic} with \normaltext{\gls{PWA-NN}} contr0oller $u(x) = \Phi_{\normaltext{\textrm{NN}}}(x)$ is ultimately bounded in $b \mc S$, uniformly in $\mc S$. In addition, the resulting closed-loop trajectories are constrained in $\mc{S} \subset \mc{X}$, and converge exponentially fast to $b \mc S$ with rate $\rho$.
		\hfill$\square$
\end{proposition}
}
\begin{proof}
	We start by noting that the perturbation $d(x,w)$ acting on the system in \eqref{eq:perturbed_dyn} can be bounded for all $x \in \mc S$ and $w \in \mc W$ within the set $\mc D \defeq \set{d \in \R^n}{\norm{d}_\infty \leq \bar{d}}$, where $\bar{d} \defeq s \, \textrm{max}_{i \in \mc{M}} \{\|B_i\|_\infty\} \ \bar{e}_\alpha$, and $s > 0$ is some scaling factor accounting for the choice of $\alpha \in \{1, \infty\}$. In view of the subadditivity and positive homogeneity properties of the Minkowski functional $\Psi(x) = \|Fx\|_\infty$ \cite[Prop.~3.12]{blanchini2015set}, for all $x \in \mc S$ and $w \in \mc W$ we also have that
	$$
	\begin{aligned}
		\Psi(A(w) x + B(w) \Phi(x) + d(x,w)) &\le \Psi(A(w) x + B(w) \Phi(x)) + \Psi(d(x,w)),\\
		&\le \lambda \Psi(x) + \norm{F}_\infty \norm{d(x,w)}_\infty,\\
		&\le \lambda \Psi(x) + \tau \bar{e}_\alpha,
	\end{aligned}
	$$
	with $\tau \defeq s \norm{F}_\infty \textrm{max}_{i \in \mc{M}} \ \{\|B_i\|_\infty\}$. However, due to the effect of the disturbance the gauge function $\Psi(\cdot)$ is not guaranteed to decrease along the trajectories of \eqref{eq:perturbed_dyn}.
	On the other hand, since $d(x,w)$ is bounded we can fix some $b \in (0,1)$ to focus on a sublevel set $b \mc S \subset \mc S$ to show that it is robust positively invariant for the perturbed dynamics in \eqref{eq:perturbed_dyn}. In particular, for a given $x \in b \mc S$ we have that $\Psi(A(w) x + B(w) \Phi(x) + d(x,w)) < \Psi(x) \le b$ for all possible disturbances $d(x,w) \in \mc D$. We thus obtain $\Psi(A(w) x + B(w) \Phi(x) + d(x,w)) \le \lambda \Psi(x) + \tau \bar{e}_\alpha \le \lambda b + \tau \bar{e}_\alpha$, which is strictly smaller than $b$ if $b > \tau \bar{e}_\alpha / (1-\lambda)$. The perturbed one-step evolution then satisfies $x^+ = A(w) x + B(w) \Phi(x) + d(x,w) \in b \mc S$, for all $x \in b \mc S$ and $\infty$-norm bounded disturbances $\norm{d(x,w)}_\infty \leq \bar{d}$.
	
	Now suppose that $\bar{e}_\alpha \le (\rho - \lambda)b/\tau$ for some $\rho \in (\lambda, 1)$ so that $b \ge \tau \bar{e}_\alpha/(\rho - \lambda) > \tau \bar{e}_\alpha/(1 - \lambda)$ holds true. Then for any point $x \in \mc S \setminus b \mc S$ we have $\Psi(x) \leq b$, hence
	$$
	\begin{aligned}
		\Psi(A(w) x + B(w) \Phi(x) + d(x,w)) &\le \lambda \Psi(x) + \tau \bar{e}_\alpha,\\
		&\le \lambda \Psi(x) + (\rho - \lambda) \Psi(x) \leq \rho \Psi(x),
	\end{aligned}
	$$
	meaning that the perturbed dynamics will converge exponentially fast to $b \mc S$ with rate $\rho$, i.e., $\mc S \setminus b \mc S$ is a contractive set for \eqref{eq:perturbed_dyn}, and therefore robustly positively invariant. On the other hand, we shall also ensure that $b < 1$, thus finally obtaining
	\begin{equation}\label{eq:WC_bound}
		\bar{e}_{\alpha} < \frac{(\rho - \lambda)}{\tau} \rdefeq \zeta.
	\end{equation}
	{Thus, in case \eqref{eq:WC_bound} is satisfied, the conditions reported in Definition~\ref{def:uub} are met. In fact, since $\mc S \setminus b \mc S$ is $\rho$-contractive for the perturbed dynamics in \eqref{eq:perturbed_dyn}, some time instant $K = K(x(0)) \in \N_0$, $x(0)\in\mc S$, is guaranteed to exist finite so that \eqref{eq:perturbed_dyn} enters, and remains confined, in $b \mc S$. Hence, the system in \eqref{eq:polytopic} with \normaltext{\gls{PWA-NN}} controller $u(x) = \Phi_{\normaltext{\textrm{NN}}}(x)$ is ultimately bounded in $b \mc S$, uniformly in $\mc S$, where $b \mc S$ represents a neighbourhood of the origin with adjustable size.
	}
\end{proof}

In \S \ref{sec:certificates} we will then show how to compute the worst-case approximation error of $e(\cdot)$ exactly, thus providing a condition sufficient to certify the stability and performance of a \gls{ReLU}-based approximation of $\Phi(\cdot)$ as defined by any of \eqref{eq:vertex_law_1}--\eqref{eq:online_control_1}. A discussion on the complexity the \gls{NN} should have to make $\bar{e}_{\alpha}$ small enough will hence follow in \S \ref{subsec:complexity}.

\section{Geometric characterization of controllers for polytopic systems}\label{sec:geometric_char}
We now characterize a stabilizing control law $\Phi(\cdot)$ from a geometrical perspective. While both of the vertex-based policies $\Phi(\cdot)$ defined in \eqref{eq:vertex_law_1} or \eqref{eq:vertex_law_2} are known to produce a controller with \gls{PWA} structure, the structure underlying a selection-based controller $\Phi(\cdot)$ as defined in \S\ref{subsec:MinimalSelectionControl}  is less clear in view of the nonlinear constraints~in~\eqref{eq:online_control_1}.

In fact, the continuous \gls{PWA} structure of \eqref{eq:vertex_law_1} comes by construction, since it is defined directly over a simplicial partition,  while for  \eqref{eq:vertex_law_2} that structure can be proved by recognizing that the controller's definition amounts to that of a strictly convex \gls{mp-QP}, so that available results from \cite[Ch.~6.2, 6.3]{borrelli2017predictive} can be applied.   
In either case, the global error bound $\bar e_\alpha$ can be computed directly using the techniques in \cite{fabiani2021reliably}. The equivalent constants for the selection-based controller in \eqref{eq:online_control_1}, instead, are more difficult to determine.	

To this end, let the control input set be defined as $\mc U \defeq \set{v \in \R^m}{D v \le c}$, with $D \in \R^{\ell \times m}$, and $c \in \R^\ell$.   Computing the controller $\Phi(\cdot)$ in \eqref{eq:online_control_1} then amounts to solving the following optimization problem \cite[Eq.~(4.46)]{blanchini2015set}:
\begin{equation}\label{eq:online_control_2}
	\begin{aligned}
		\Phi(x)~=~&\underset{v}{\textrm{argmin}} && \tfrac{1}{2} v^\top H v + x^\top P v\\
		& \hspace{.1cm}\textrm{ ~s.t. } && C v \leq d + S(x),\\
	\end{aligned}
\end{equation}
where $H \in \mathbb{S}^m_{\succ 0}$, $P \in \R^{n \times m}$, and matrices $C \defeq \col(D, \col((FB_i)_{i \in \mc{M}})$, $d \defeq \col(c, \bs{0})$ and
$$
S(x) \defeq \left[
\begin{array}{c}
	\bs{0}\\
	\lambda \ \underset{j \in \mc{P}}{\textrm{max}} \ \{F_{j,:} x\} \otimes \bsone -  \col((F A_i)_{i \in \mc{M}}) x
\end{array}
\right].
$$
Note that standard methods for proving that the parametric solution of \eqref{eq:online_control_2} in $x$ is \gls{PWA} continuous can not be applied because the right hand side $S(x)$ of the inequalities is not affine.    We can be sure, however, that the problem has a solution for any $x \in \mc{S}$ since the associated Minkowski function $\Psi(\cdot)$ is a \gls{CLF} for the polytopic system in \eqref{eq:polytopic}. 
Next, we establish that $\Phi(\cdot)$ enjoys a continuous \gls{PWA} structure:

\begin{figure}[t!]
	\centering
	\includegraphics[width=.5\columnwidth]{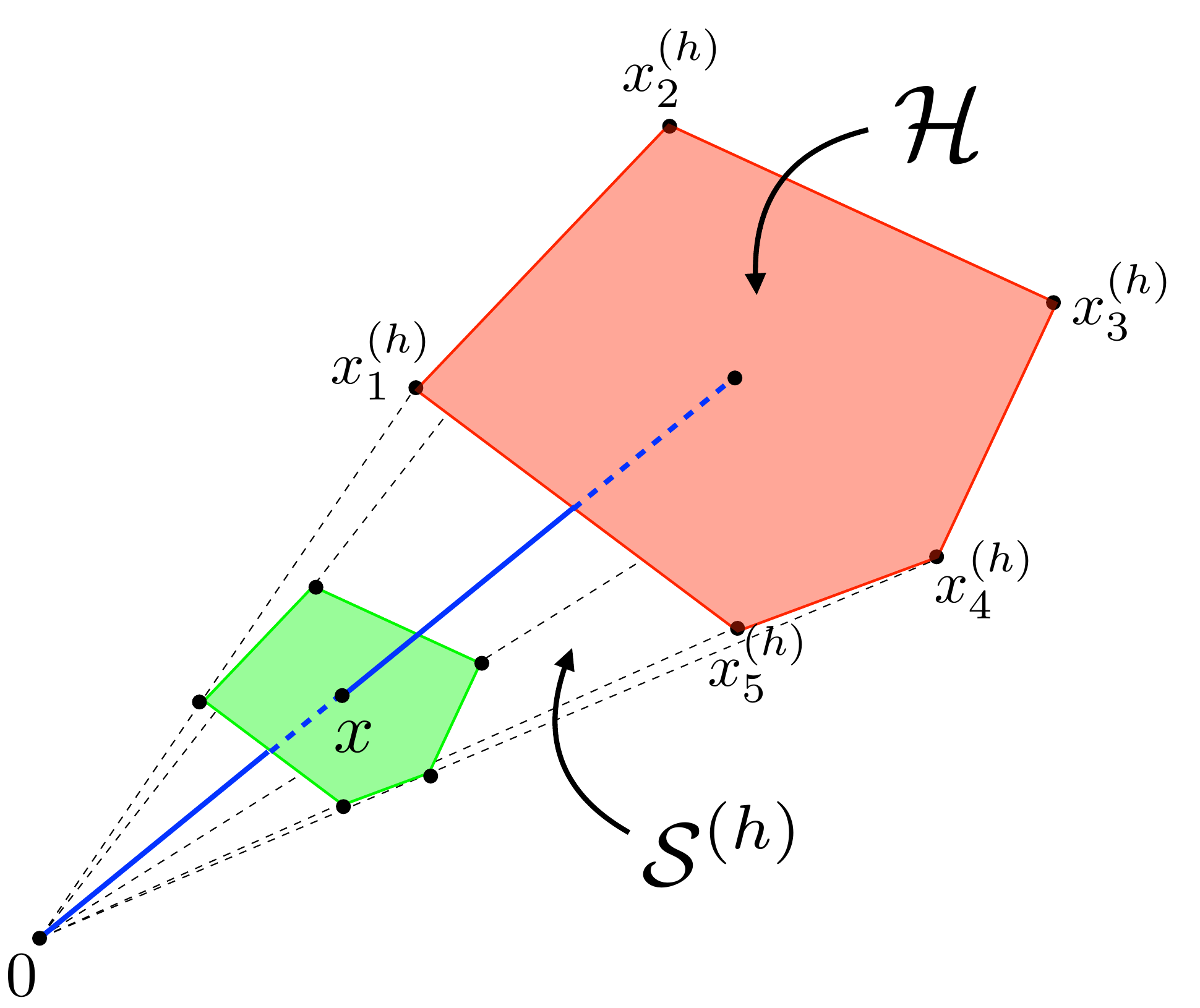}
	\caption{Three-dimensional schematic representation of the first part of the proof of Theorem~\ref{th:PWA}.
	}
	\label{fig:poly_cone}
\end{figure}

\begin{figure}[t!]
	\centering
	\includegraphics[width=.85\columnwidth]{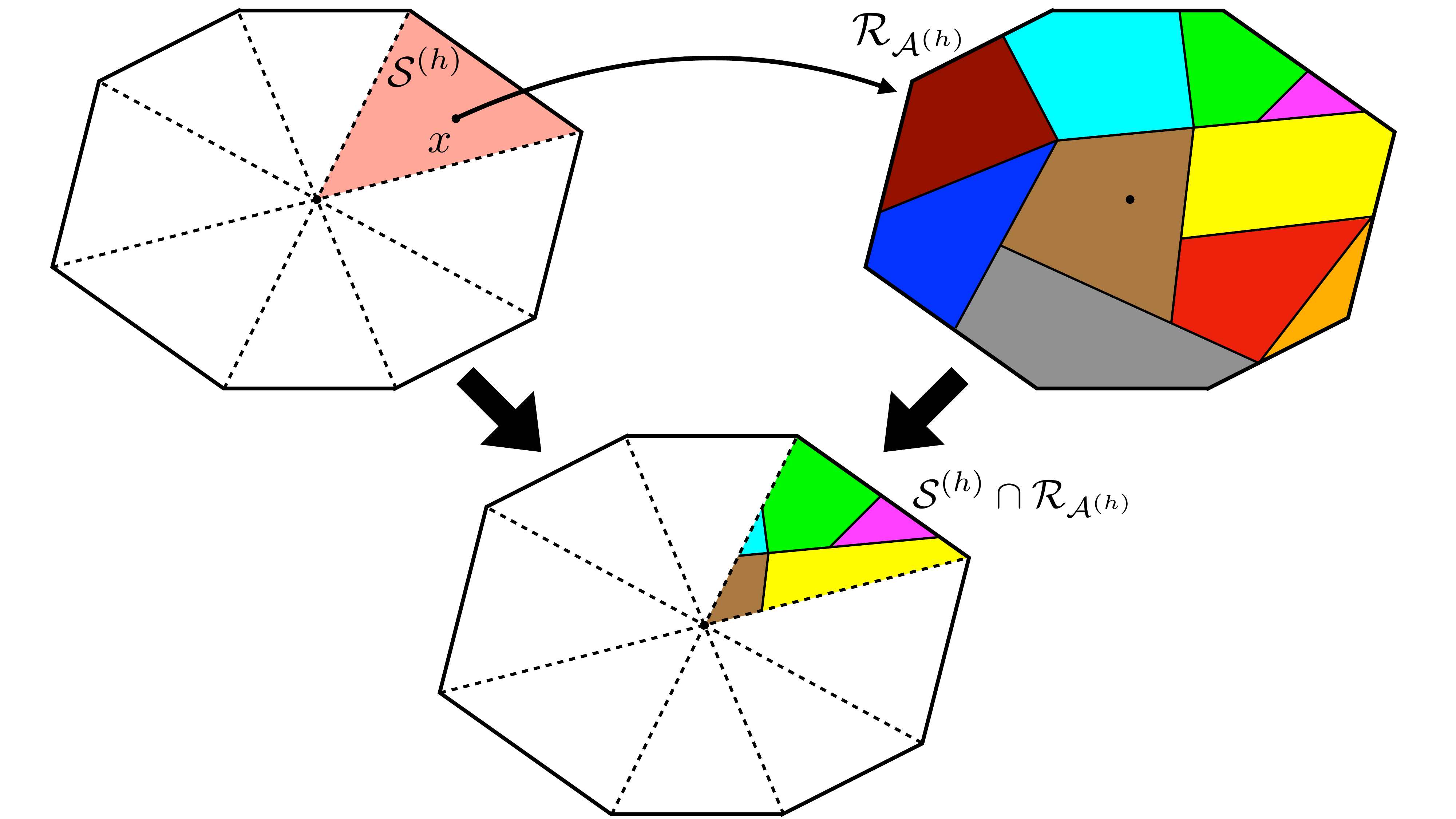}
	\caption{Pictorial representation of the proof of Theorem~\ref{th:PWA}. For any $x \in \mc{S}^{(h)}$ (shaded red area, top left figure) \eqref{eq:online_control_2} reduces to an \gls{mp-QP}, which provides an affine optimal solution in the given $x$ over a polyhedral partition of the whole of $\mc{S}$ (coloured regions, top right figure). Overall, this leads to a polyhedral partition of each sector, $\mc{S}^{(h)} \cap \mc{R}_{\mc{A}^{(h)}}$ (coloured regions inside the polyhedral sector, bottom figure), where $\mc{R}_{\mc{A}^{(h)}}$ denotes the \emph{critical region} of states $x$ associated with the set of active constraints $\mc{A}^{(h)} \subseteq \{1,\ldots,pM+\ell\}$, $\mc{R}_{\mc{A}^{(h)}} \coloneqq \set{x \in \mc S}{\mc{A}^{(h)}(x) = \mc{A}^{(h)}}$ and $\mc{A}^{(h)}(x)$ as defined in \eqref{eq:active_constraints}.}
	\label{fig:PWA}
\end{figure}

\begin{theorem}\label{th:PWA}
	The state-to-input mapping $x \mapsto \Phi(x)$ defined in \eqref{eq:online_control_2} is a \normaltext{\gls{PWA}} continuous mapping on $\mc{S}$.
	\hfill$\square$
\end{theorem}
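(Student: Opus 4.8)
The plan is to exploit the observation that, although the right-hand side $S(x)$ in \eqref{eq:online_control_2} is not affine on all of $\mc S$, it becomes affine once one restricts attention to a region on which a single row of $F$ attains the maximum defining $\Psi(\cdot)$; on each such region \eqref{eq:online_control_2} reduces to a strictly convex \gls{mp-QP}, whose parametric solution is known to be \gls{PWA}. One then patches together the finitely many local \gls{PWA} descriptions.

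\emph{Step 1 (decomposition into polyhedral sectors).} For each $h \in \mc P$ I would define the polyhedral cone $\mc C_h \eqdef \set{x \in \R^n}{(F_{k,:} - F_{h,:})\,x \le 0 \text{ for all } k \in \mc P}$ and set $\mc{S}^{(h)} \eqdef \mc S \cap \mc C_h$ (see Fig.~\ref{fig:poly_cone}). The cones $\{\mc C_h\}_{h \in \mc P}$ cover $\R^n$ and, by minimality of the representation of $\mc S$, have pairwise disjoint interiors, so $\{\mc{S}^{(h)}\}_{h \in \mc P}$ is a finite family of polyhedra whose union is $\mc S$; moreover $\Psi(x) = F_{h,:}\,x$ for every $x \in \mc{S}^{(h)}$.

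\emph{Step 2 (reduction to an mp-QP on each sector).} Fix $h \in \mc P$. Since $\Psi(x) = F_{h,:}x$ on $\mc{S}^{(h)}$, the term $\lambda\,\underset{j\in\mc P}{\textrm{max}}\{F_{j,:}x\}\otimes\bsone$ equals $\lambda(\bsone \otimes F_{h,:})\,x$ there, so $S(x) = S^{(h)}x$ on $\mc{S}^{(h)}$ with $S^{(h)} \eqdef \col\big(\bs 0,\ \lambda(\bsone \otimes F_{h,:}) - \col((FA_i)_{i \in \mc M})\big)$ a constant matrix. Hence on $\mc{S}^{(h)}$ the controller is
$$
\Phi(x) = \underset{v}{\argmin} \ \set{\tfrac12 v^\top H v + x^\top P v}{C v \le d + S^{(h)} x}.
$$
As $H \in \bbS^m_{\succ 0}$, this is a strictly convex \gls{mp-QP} in the parameter $x$, feasible for every $x \in \mc{S}^{(h)} \subseteq \mc S$ because $\Psi(\cdot)$ is a \gls{CLF} for \eqref{eq:polytopic}. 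By the standard theory of strictly convex multi-parametric quadratic programs \cite[Ch.~6.2--6.3]{borrelli2017predictive}, the unique optimizer $x \mapsto \Phi(x)$ is a continuous \gls{PWA} map on the feasible parameter set, its polyhedral pieces being the critical regions associated with the finitely many optimal active sets $\mc A^{(h)} \subseteq \{1,\dots,pM+\ell\}$; intersecting with $\mc{S}^{(h)}$ yields finitely many polyhedra on each of which $\Phi(\cdot)$ is affine.

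\emph{Step 3 (patching).} Taking the common refinement of the finite partition $\{\mc{S}^{(h)}\}_{h \in \mc P}$ with the critical-region partitions of Step 2 produces a single finite partition of $\mc S$ into polyhedra $\mc{S}^{(h)} \cap \mc{R}_{\mc A^{(h)}}$ on each of which $\Phi(\cdot)$ agrees with an affine map (the cells depicted in Fig.~\ref{fig:PWA}), which is precisely Definition~\ref{def:pwa}. The local descriptions are consistent across sector boundaries: if $x \in \mc{S}^{(h)} \cap \mc{S}^{(h')}$ then $F_{h,:}\,x = F_{h',:}\,x = \Psi(x)$, so the constraint set of \eqref{eq:online_control_2} — and therefore, by strict convexity, its unique minimizer — does not depend on which sector $x$ is viewed in; this also recovers the known continuity of $\Phi(\cdot)$ \cite[Ch.~5.4]{aubin2012differential}. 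I expect the main obstacle to be Step 2, specifically verifying that the \gls{mp-QP} obtained on each sector is feasible over the whole of $\mc{S}^{(h)}$ — so that the critical regions genuinely tile $\mc S$ and no part of the domain is left uncovered, which rests on the \gls{CLF} assumption — and checking carefully that the Kronecker-product term collapses to the constant-matrix expression $S^{(h)}x$ on $\mc{S}^{(h)}$. Once these points are in place, invoking the regularity of strictly convex \gls{mp-QP}s and assembling the pieces into one finite polyhedral partition is routine.
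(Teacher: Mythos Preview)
Your proposal is correct and follows essentially the same route as the paper: partition $\mc S$ into polyhedral sectors on which a single row of $F$ attains the maximum (your cones $\mc C_h$ coincide with the paper's $\textrm{cone}(\mc H)$ for each facet), reduce \eqref{eq:online_control_2} to a strictly convex \gls{mp-QP} on each sector, and patch the resulting local \gls{PWA} solutions over the refined polyhedral partition. The only cosmetic difference is that the paper invokes the \gls{LICQ} of Standing Assumption~\ref{standing:LICQ} for uniqueness of the parametric solution, whereas you obtain primal uniqueness directly from $H \in \bbS^m_{\succ 0}$; your added boundary-consistency check in Step~3 is a nice touch the paper leaves implicit.
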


Before proving Theorem~\ref{th:PWA} some preliminary considerations are needed. Suppose that there exists some region $\tilde{\mc{S}} \subseteq \mc{S}$ for which $S(x)$ is linear in $x$, i.e., $$S(x) = \tilde S x \ \text{ for all } x\in \tilde{\mc S},$$ for some matrix $\tilde S$. Then, for all $x \in \tilde{\mc{S}}$ the problem in \eqref{eq:online_control_2} becomes a \gls{QP} that can be written as
\begin{equation}\label{eq:online_control_compact}
	\left\{
	\begin{aligned}
		&\underset{v}{\textrm{min}} && \tfrac{1}{2} v^\top H v + x^\top P v\\
		& \hspace{0cm}\textrm{ s.t. } && C v \leq d +  \tilde S x,\\
	\end{aligned}
	\right.
\end{equation}
Without loss of generality we can additionally assume that $\rank(\tilde S) = n$, since otherwise \eqref{eq:online_control_compact} can be reduced to an equivalent form over a smaller set of parameters \cite{borrelli2017predictive}.

The problem \eqref{eq:online_control_compact} now looks like a standard \gls{mp-QP} whose solution can be computed parametrically in $x$.   Putting aside for the moment the issue of possible degeneracy, this parametric solution could even be computed over the whole of $\mc S$.   Our proof will therefore proceed as follows (see also Fig.~\ref{fig:PWA} for a schematic representation):
\begin{enumerate}
	\item[(i)] Partition $\mc S$ into a finite collection of disjoint polyhedral sets according to Definition~\ref{def:pwa}, where each region is associated with a different matrix $S^{(h)}$ such that $S(x) = S^{(h)}x$ for all $x \in \mc{S}^{(h)}$.
	\item[(ii)] For each $\mc{S}^{(h)}$, generate a continuous \gls{PWA} solution to an \gls{mp-QP} in the form \eqref{eq:online_control_compact} over the whole of $\mc S$.
	\item[(iii)] For each $\mc{S}^{(h)}$, intersect the set with the associated \gls{mp-QP} partition from (ii).  Assemble a solution over the whole of $\mc S$ from the resulting functions.
\end{enumerate}
Unlike the discussion in \S \ref{subsec:VertexController}, note that in this case each $\mc{S}^{(h)}$ does not necessarily stem from a triangulation procedure, i.e., it is not necessarily a simplex (unless $n = 2$, and hence one is obliged to choose $\mc S^{(h)} = \mc T^{(h)}$).
Before proceeding we must be careful to ensure that there is no issue with non-uniqueness of solutions to \eqref{eq:online_control_compact} over the generating sets $\mc{S}^{(h)}$ themselves.  To avoid pathological cases we will make a further assumption about \eqref{eq:online_control_compact} for any allowable choice of $\mc{S}^{(h)}$, the purpose of which will be evident in the proof of Theorem~\ref{th:PWA}. For some $x \in \mc{S}^{(h)}$ define the sets of \emph{active constraints} at a feasible point $v$ of \eqref{eq:online_control_compact} as:
\begin{equation}\label{eq:active_constraints}
	\mc{A}^{(h)}(x) \defeq \{i \in \{1, \ldots, pM+\ell \} \mid C_{i,:}  v - d_i - S^{(h)}_{i,:}  x = 0\}.
\end{equation}
\begin{standing}\label{standing:LICQ}\textup{(Linear independence constraint qualification \cite[Def.~2.1]{borrelli2017predictive})}\label{def:LICQ}
	For any $x \in \mc{S}^{(h)}$, the \normaltext{\gls{LICQ}} is said to hold at a feasible point $v$ of
	\eqref{eq:online_control_compact} if $C_{\mc{A}^{(h)}(x), :}$ has linearly independent rows.
	For all regions $\mc{S}^{(h)} \subseteq \mc{S}$ over which $S(x)$ is linear, the \normaltext{\gls{LICQ}} is assumed to hold for the resulting \normaltext{\gls{mp-QP}} as in \eqref{eq:online_control_compact} for all $x \in \mc{S}^{(h)}$.
	\hfill$\square$
\end{standing}
The \gls{LICQ} is sufficient to exclude the case where more than $m$ constraints are active at a given feasible point $v$, thereby avoiding primal degeneracy \cite[\S 4.1.1]{bemporad2002explicit}. For any region $\mc{S}^{(h)} \subseteq \mc{S}$ for which $S(x)$ is linear, under \gls{LICQ}  the problem dual to the resulting \gls{mp-QP} in \eqref{eq:online_control_compact} will be a strictly convex program whose solution is characterized by a unique choice of Lagrange multipliers.
We are now in a position to prove the main result:

\textit{Proof of Theorem~\ref{th:PWA}}:
Suppose that $\mc {H}$ is the facet of $\mc S$ generated by the $h$-th row of the inequality $Fx \le \bf 1$ that defines $\mc S$.   Each such facet generates a polyhedral cone that can be defined as $\textrm{cone}(\mc H) \defeq \set{\lambda x}{ x\in \mc H, \ \lambda\ge 0}$.    Define $\mc{S}^{(h)} \defeq \mc S \cap \textrm{cone}(\mc H)$, which is a polyhedral set with one vertex at the origin and the remaining vertices on $\mc H$ (see, e.g., Fig.~\ref{fig:poly_cone}).  It is then easily shown that
$
\textrm{argmax}_{j\in \mc{P}} \ \{F_{j,:} x \}
$
is the same for all $x\in \mc{S}^{(h)}$, and therefore there exists some matrix $S^{(h)}$ such that $S(x) = S^{(h)} x$ for all $x\in \mc{S}^{(h)}$. Specifically, $S^{(h)} = \col(\bs{0}, \lambda F_{h,:} \otimes \bs{1} - \col((F A_i)_{i \in \mc{M}}))$.

Fix now any such set $\mc{S}^{(h)}$ and consider the \gls{mp-QP}
\begin{equation}\label{eq:online_control_compact_2}
	\begin{aligned}
		v^{\star,(h)}(x) ~=~&\underset{v}{\textrm{argmin}} && \tfrac{1}{2} v^\top H v + x^\top P v\\
		& \hspace{0cm}\textrm{ ~~s.t. } && C v \leq d +  S^{(h)} x,\\
	\end{aligned}
\end{equation}
Since the problem \eqref{eq:online_control_2} is feasible for all $x\in \mc S$, the set of points $x$ for which \eqref{eq:online_control_compact_2} is feasible must be at least as large as $\mc S^{(h)} \subseteq \mc S$ since the constraints in \eqref{eq:online_control_compact_2} are identical to those in \eqref{eq:online_control_2} over the set $\mc S^{(h)}$.   Considering \eqref{eq:online_control_compact_2} as an \gls{mp-QP} in $x$ therefore produces a continuous \gls{PWA} solution over a polyhedral partition of a set that covers $\mc S^{(h)}$.  Standing Assumption~\ref{standing:LICQ} ensures that this solution is unique everywhere on $\mc S^{(h)} \subseteq \mc S$.

Finally, we can form the overall \gls{PWA} solution to \eqref{eq:online_control_2} by combining the solutions over the regions $\mc S^{(h)}$, which completely cover $\mc{S}$, i.e.
\[
\Phi(x) = v^{\star,(h)}(x) \ \text{ for all } x\in \mc{S}^{(h)}, \, h \in \mc P.
\]
This function is continuous \gls{PWA} because each $v^{\star,(h)}(x)$ is \gls{PWA} continuous on $\mc S^{(h)}$ and the sets $\mc S^{(h)}$ form a polyhedral partition of $\mc S$, according to Definition~\ref{def:pwa}.
\hfill$\blacksquare$

\begin{figure*}
	\centering
	\begin{subfigure}[b]{0.19\columnwidth}
		\centering
		\includegraphics[width=\columnwidth]{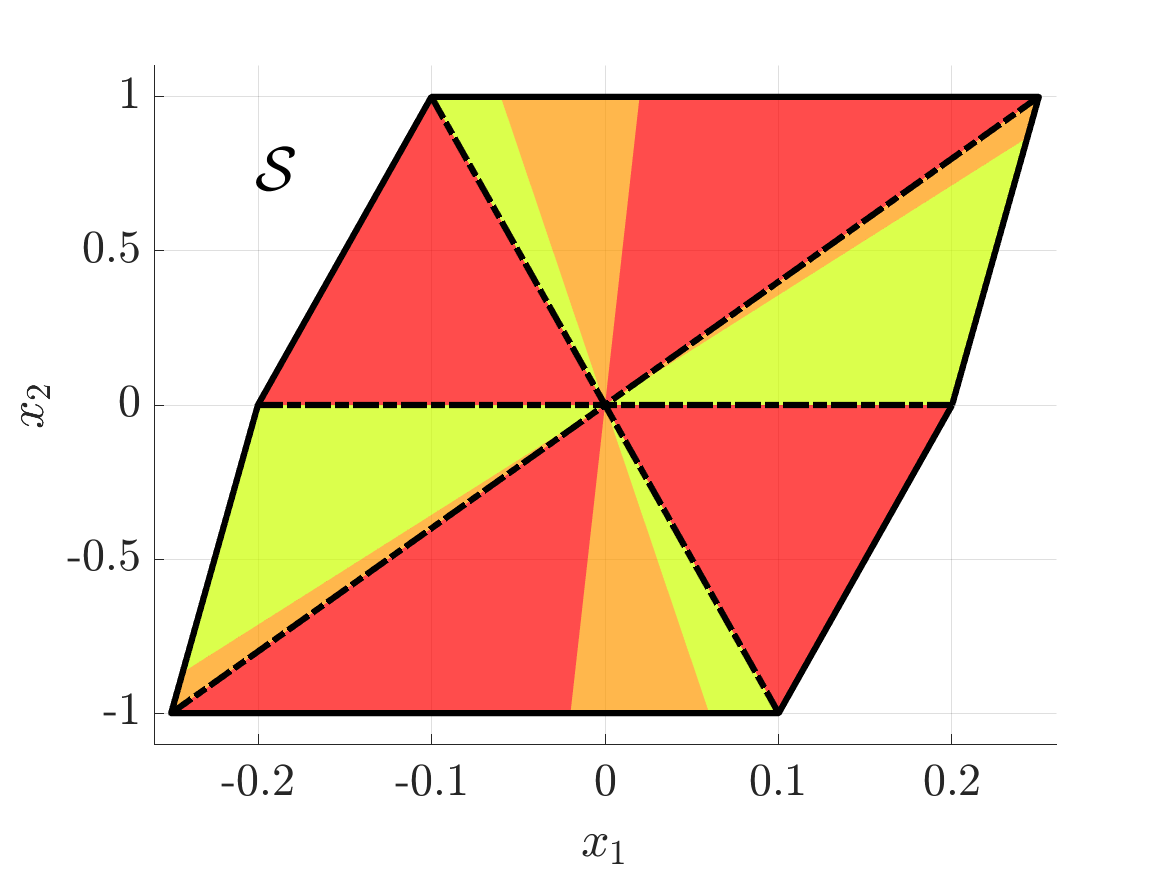}
		\caption{}
		\label{fig:a}
	\end{subfigure}
	\hfill
	\begin{subfigure}[b]{0.19\columnwidth}
		\centering
		\includegraphics[width=\columnwidth]{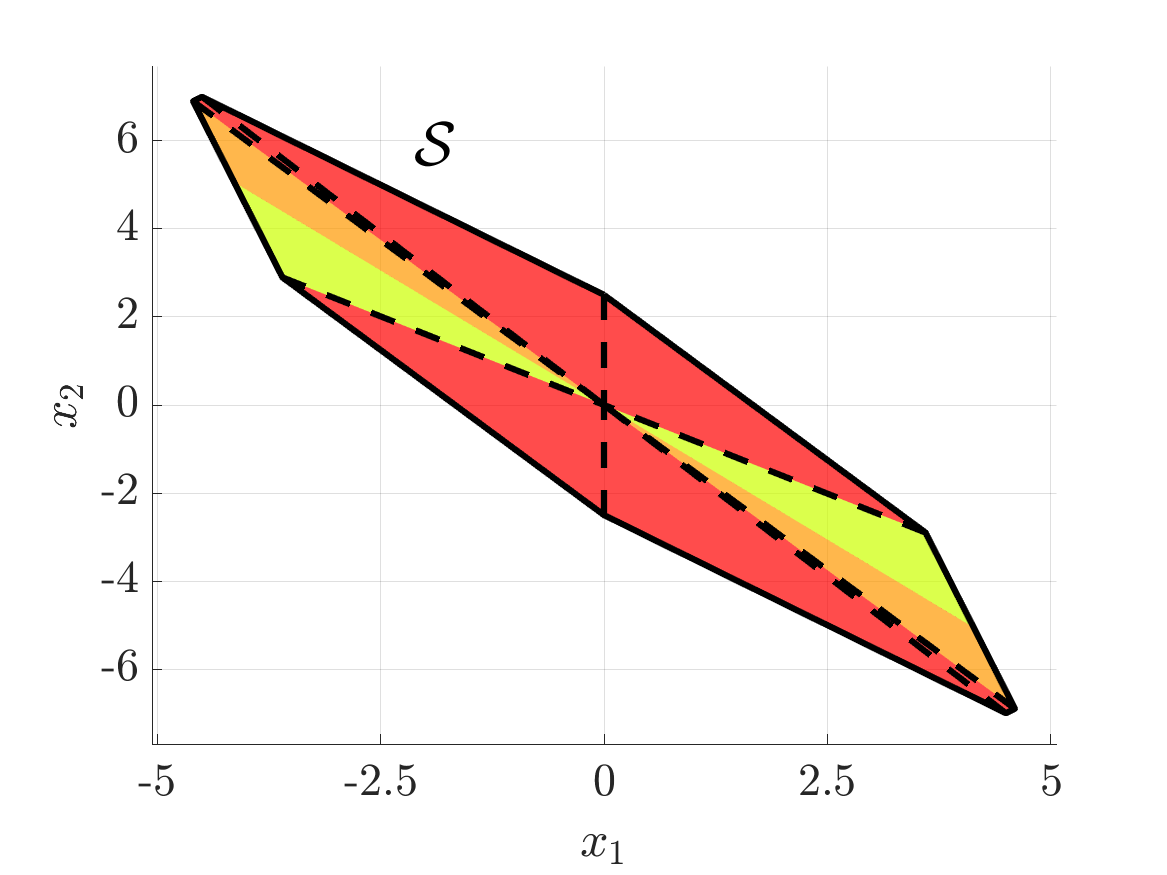}
		\caption{}
		\label{fig:b}
	\end{subfigure}
	\hfill
	\begin{subfigure}[b]{0.19\columnwidth}
		\centering
		\includegraphics[width=\columnwidth]{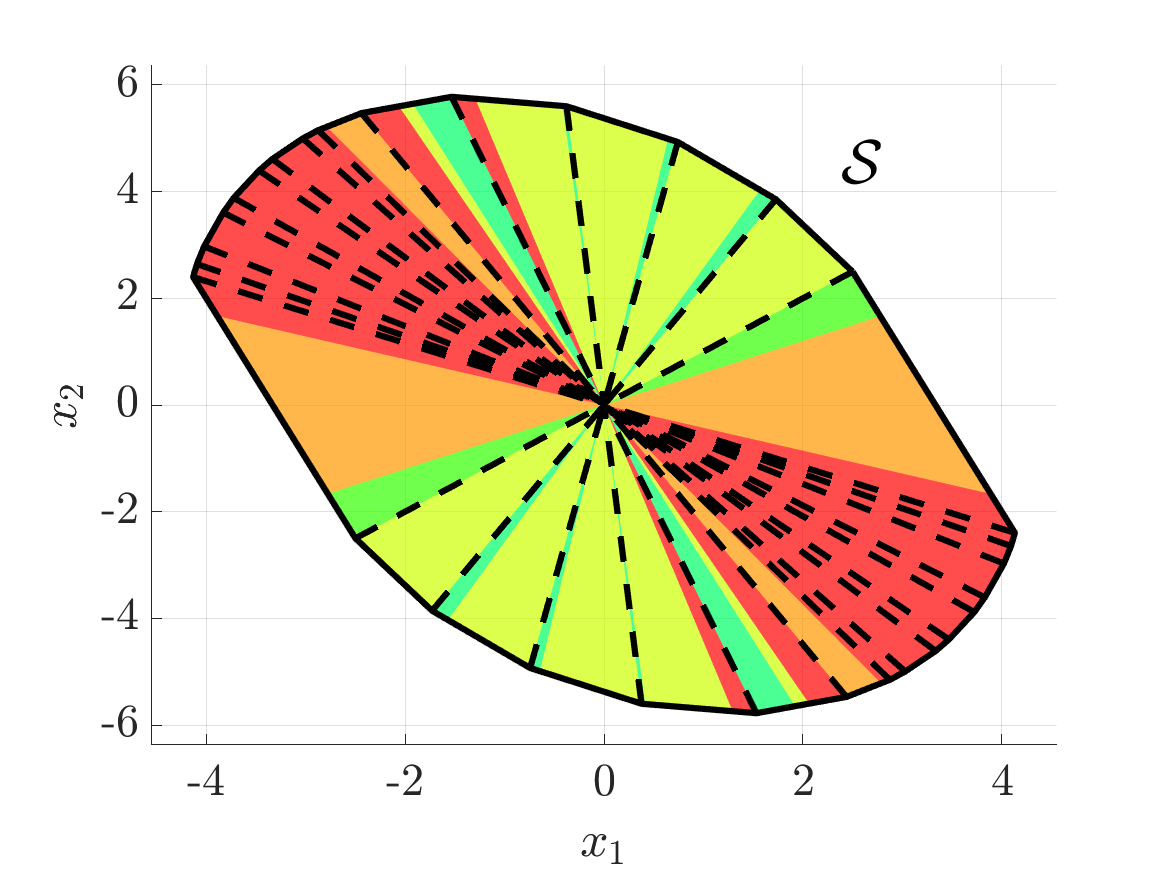}
		\caption{}
		\label{fig:c}
	\end{subfigure}
	\hfill
	\begin{subfigure}[b]{0.19\columnwidth}
		\centering
		\includegraphics[width=\columnwidth]{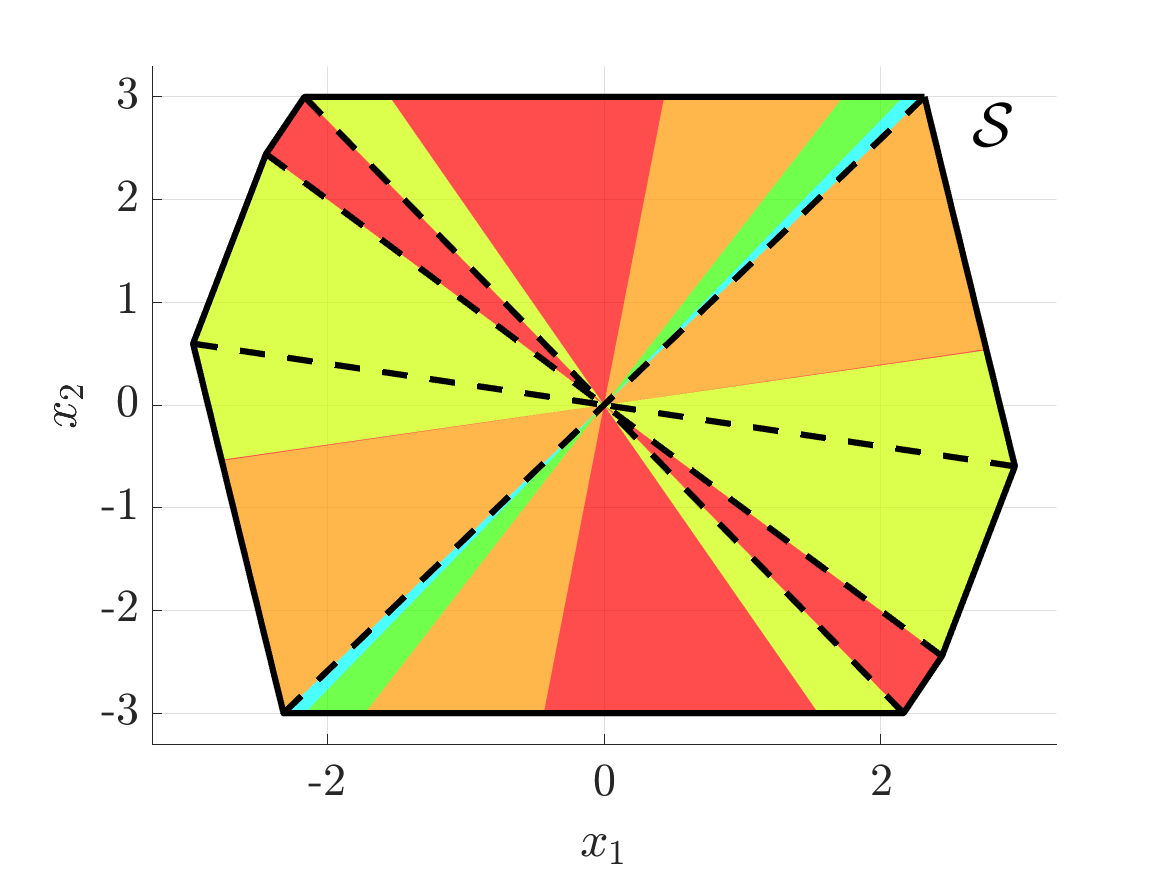}
		\caption{}
		\label{fig:d}
	\end{subfigure}
	\hfill
	\begin{subfigure}[b]{0.19\columnwidth}
		\centering
		\includegraphics[width=\columnwidth]{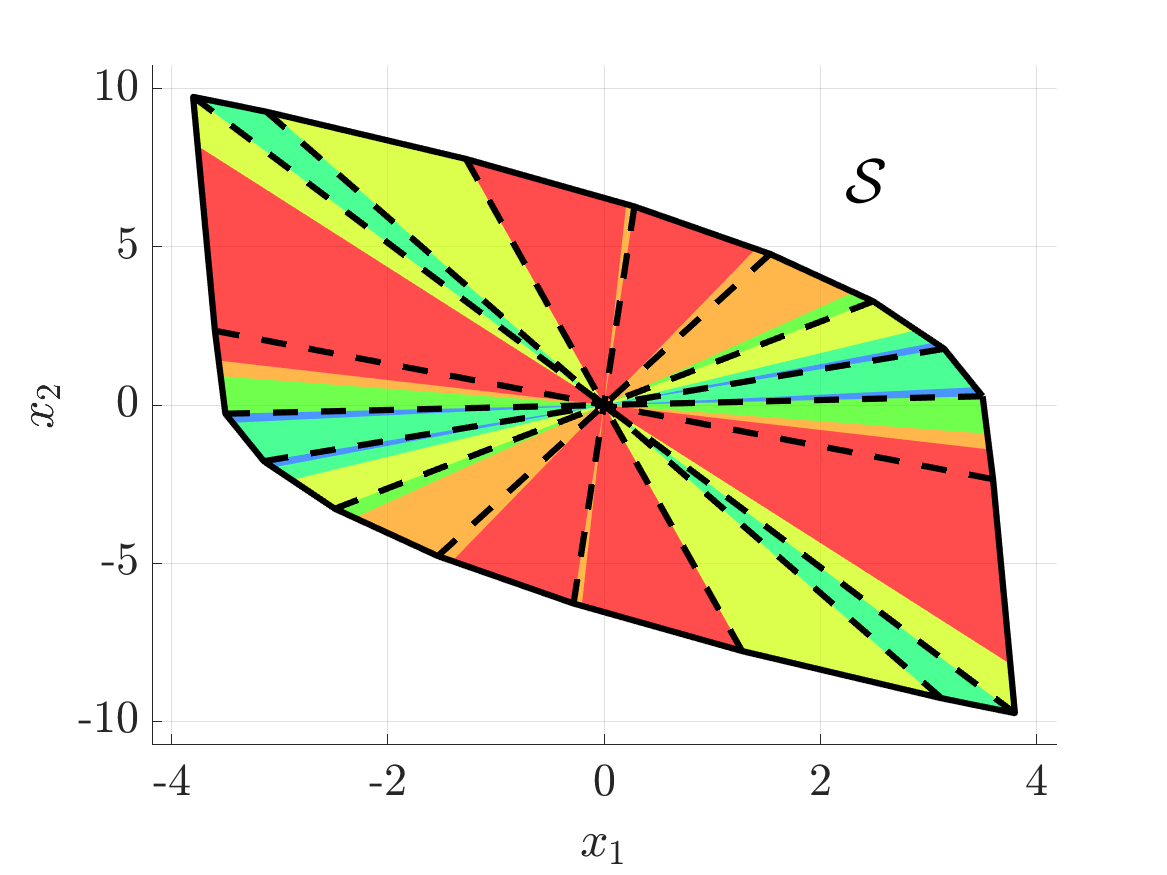}
		\caption{}
		\label{fig:e}
	\end{subfigure}
	\\
	\begin{subfigure}[b]{0.19\columnwidth}
		\centering
		\includegraphics[width=\columnwidth]{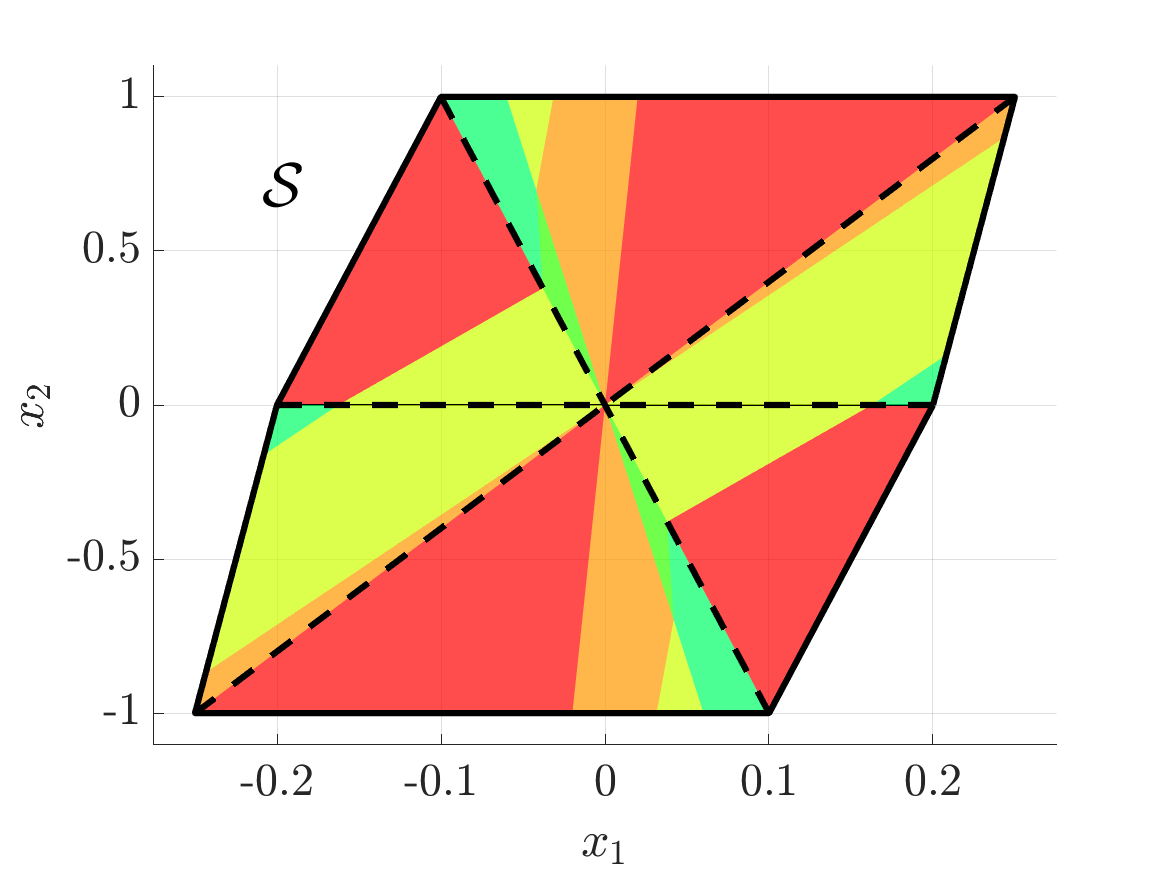}
		\caption{}
		\label{fig:f}
	\end{subfigure}
	\hfill
	\begin{subfigure}[b]{0.19\columnwidth}
		\centering
		\includegraphics[width=\columnwidth]{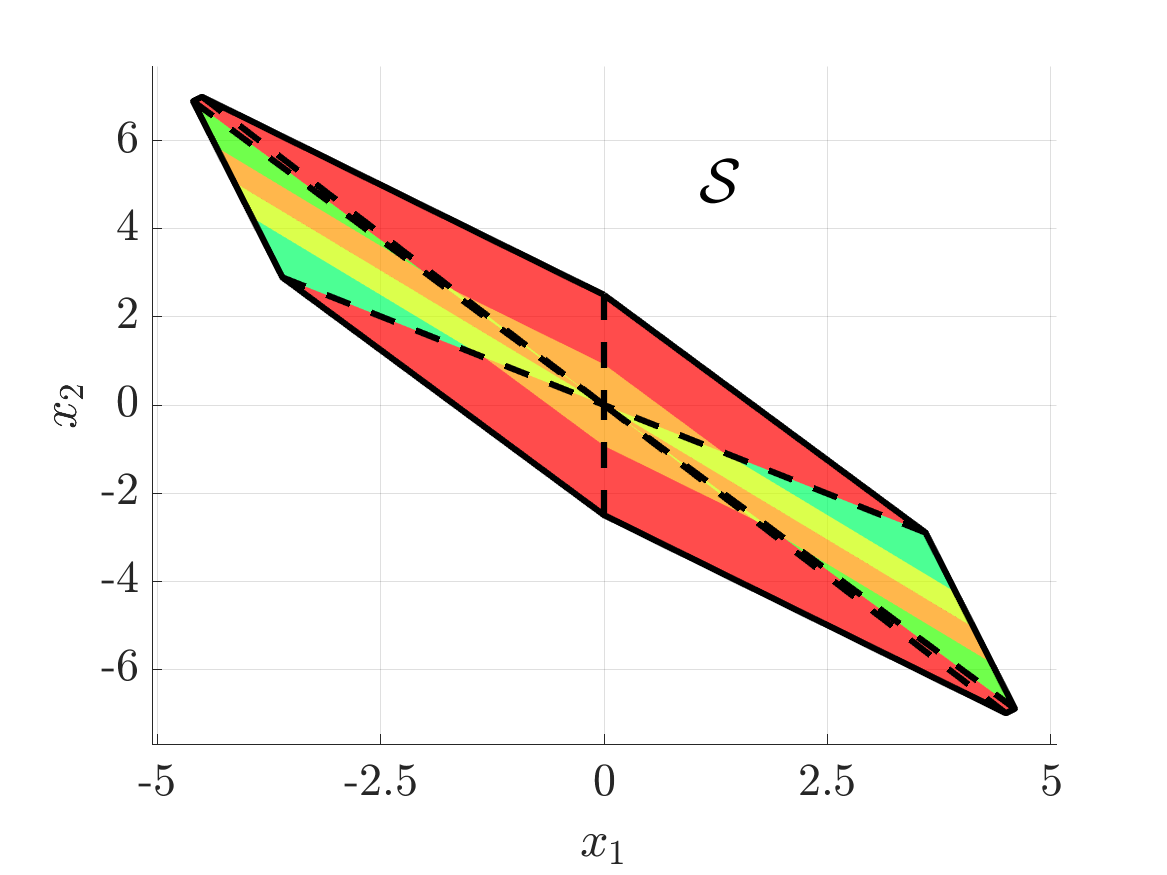}
		\caption{}
		\label{fig:g}
	\end{subfigure}
	\hfill
	\begin{subfigure}[b]{0.19\columnwidth}
		\centering
		\includegraphics[width=\columnwidth]{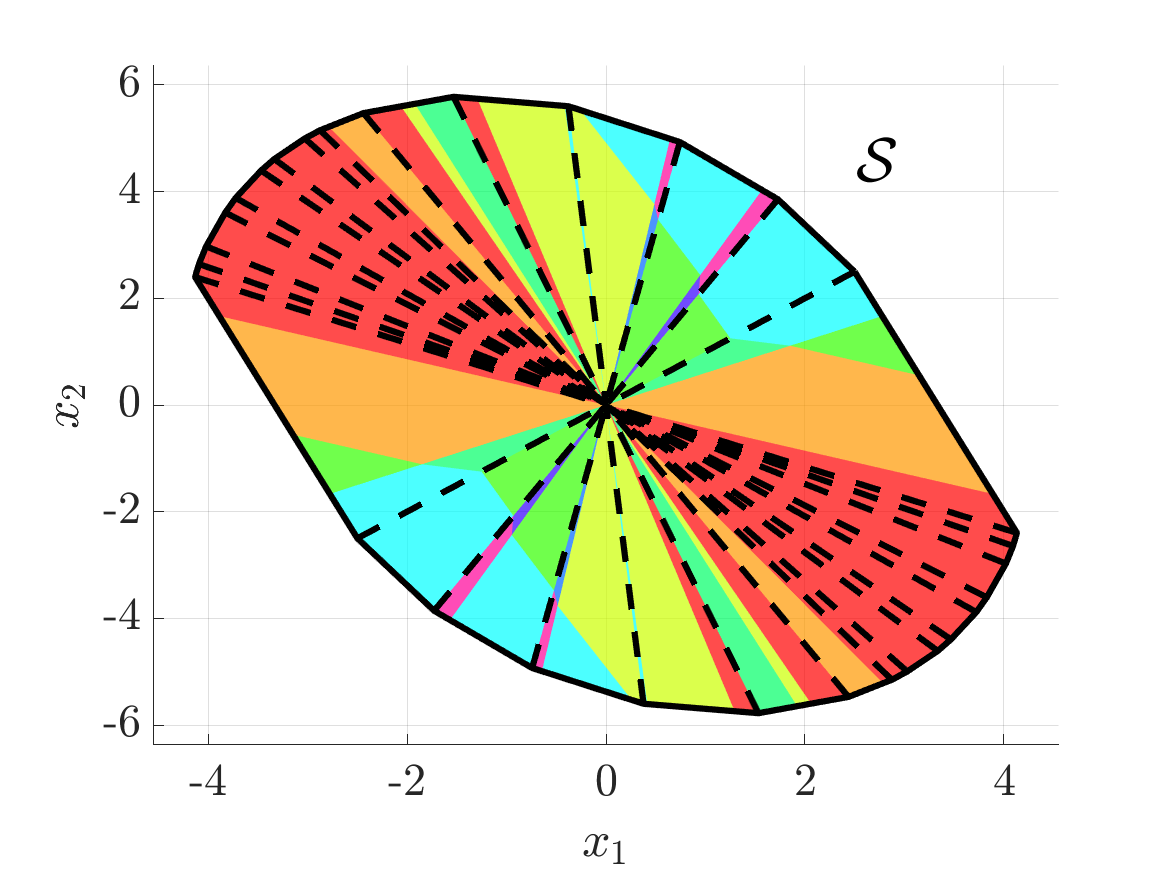}
		\caption{}
		\label{fig:h}
	\end{subfigure}
	\hfill
	\begin{subfigure}[b]{0.19\columnwidth}
		\centering
		\includegraphics[width=\columnwidth]{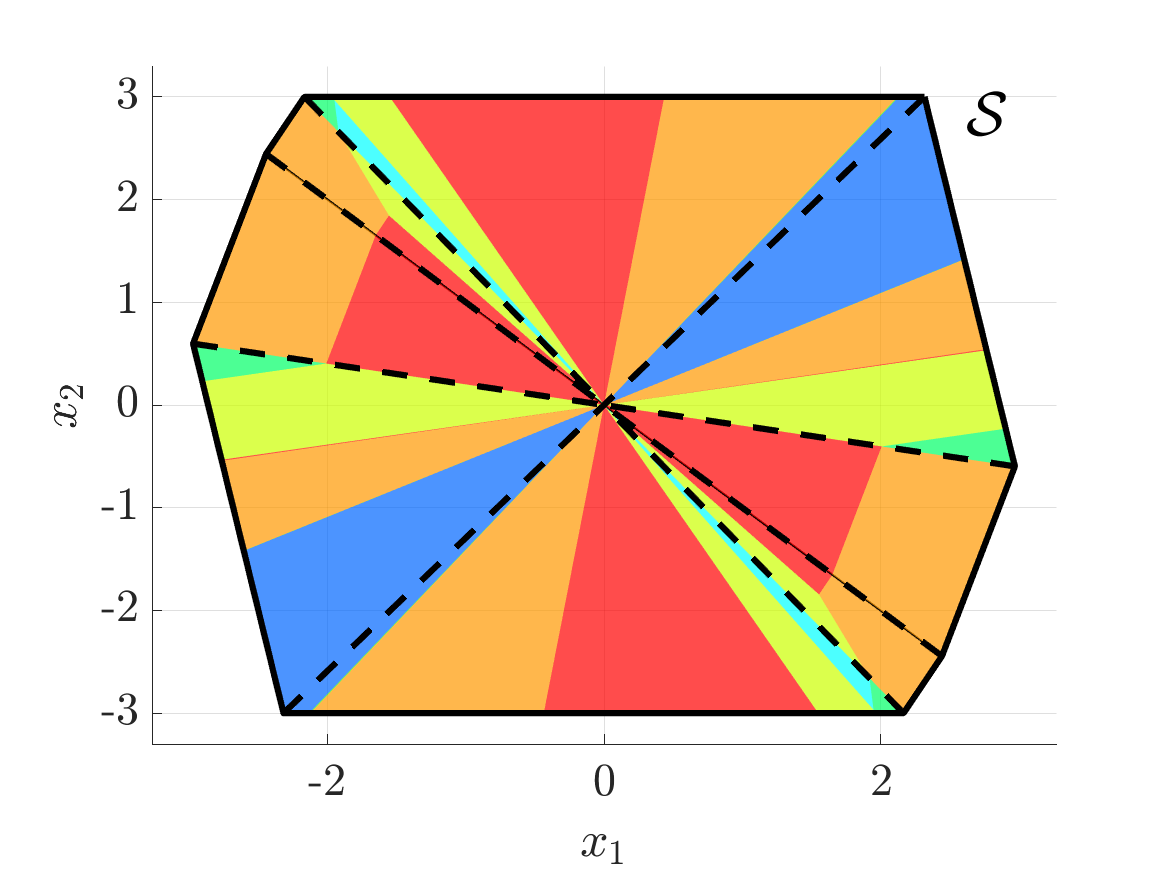}
		\caption{}
		\label{fig:i}
	\end{subfigure}
	\hfill
	\begin{subfigure}[b]{0.19\columnwidth}
		\centering
		\includegraphics[width=\columnwidth]{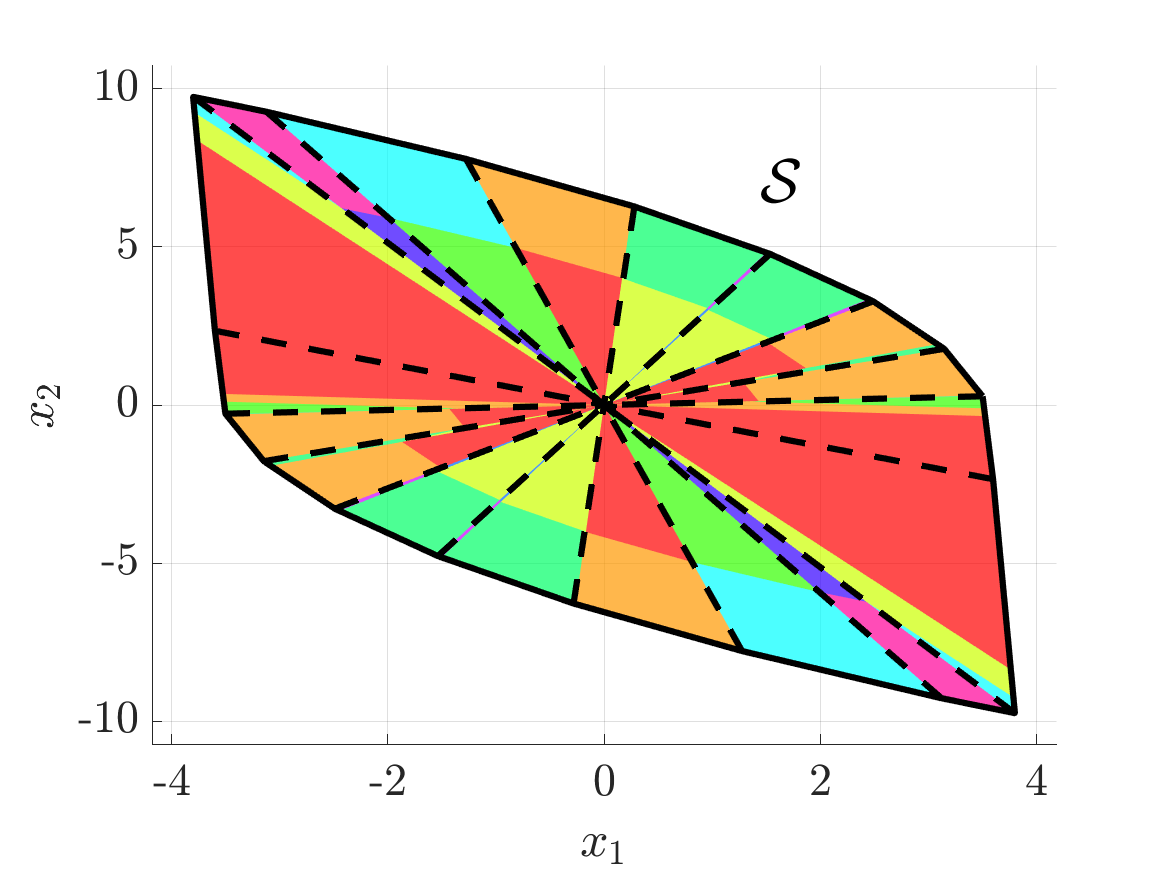}
		\caption{}
		\label{fig:j}
	\end{subfigure}
	\caption{Examples of partitions characterizing the \gls{PWA} mapping $\Phi(\cdot)$ as defined in \eqref{eq:online_control_1}. The dashed black lines denote different polyhedral sectors of $\mc{S}$, in turn partitioned into further polyhedral regions (coloured areas).}
	\label{fig:th_PWA}
\end{figure*}

\begin{table}[!t]
	\caption{Examples from the literature}\label{tab:examples}
	\centering
	\begin{tabular}{ccccccc}
		\toprule
		Reference & Ex. & $|\mc{P}|$ & $N^r$ & $\lambda$ & $\mc{U}$\\
		\midrule
		\multirow[c]{2}{*}{\cite[Ex.~7.41]{blanchini2015set}} & (a) & \multirow[c]{2}{*}{$6$}   & $14$ & \multirow[c]{2}{*}{$0.8$} & $\R$\\
		& (f)  & & $21$ & & $|u|\leq 2$\\
		\midrule
		\multirow[c]{2}{*}{\cite[Ex.~4.3]{blanchini1995constrained}} & (b) & \multirow[c]{2}{*}{$8$}   & $12$ & \multirow[c]{2}{*}{$0.92$} & $\R$\\
		& (g)  & & $26$ & & $|u|\leq 1.5$\\
		\midrule
		\multirow[c]{2}{*}{ \cite[\S V]{pluymers2005efficient}} & (c) & \multirow[c]{2}{*}{$30$}  & $48$ & \multirow[c]{2}{*}{$0.95$} & $\R$\\
		& (h)  & & $61$ & & $|u|\leq 1$\\
		\midrule
		\multirow[c]{2}{*}{\cite[Ex.~5.11]{blanchini2015set}} & (d) & \multirow[c]{2}{*}{$8$}  &  $24$ & \multirow[c]{2}{*}{$0.61$} & $\R$\\
		& (i)  & & $40$ & & $|u|\leq 0.5$\\
		\midrule
		\multirow[c]{2}{*}{\cite[Ex.~2.6]{nguyen2012constrained}} & (e) & \multirow[c]{2}{*}{$18$}  & $38$ & \multirow[c]{2}{*}{$0.96$} & $\R$\\
		& (j)  & & $66$ & & $|u|\leq 1$\\
		\bottomrule
	\end{tabular}
\end{table}

\begin{example}\label{ex:numerics}
	Fig.~\ref{fig:th_PWA} shows several numerical examples available in the literature with all models summarized in Tab.~\ref{tab:examples}. Specifically, they show the polyhedral partition of each sector in which a control invariant set for a polytopic system as in \eqref{eq:polytopic} can be divided, thus illustrating the result of Theorem~\ref{th:PWA} numerically. The partitions are obtained by solving \eqref{eq:online_control_2} with $P = 0$ and are computed via the Multi-Parametric Toolbox \cite{herceg2013multi}. For the numerical examples \normaltext{(a)--(e)} we set $\mc{U} = \R$, which results in purely proportional controllers as the matrix $D$ and vector $c$ do not appear in the constraints in \eqref{eq:online_control_2}. Finally, note that in both cases the number of polyhedral regions produced by \eqref{eq:online_control_2}, $N^r$, is always greater than the number of facets of the considered control invariant set $\mc{S}$, illustrating that  significant memory may be required to store all the involved matrices/vectors for every region.
	\hfill$\square$
\end{example}
The fact that $x \mapsto \Phi(x)$ as defined in \eqref{eq:online_control_1} (or \eqref{eq:online_control_2}) is Lipschitz continuous was already known from standard results in viability theory \cite[Ch.~5.4]{aubin2012differential}. Besides confirming this, Theorem~\ref{th:PWA} also provides a constructive way to find $\mc L_\alpha(\Phi,\mc S)$ as the maximal $\alpha$-norm of the linear gains associated to the $N^r$ polyhedral regions produced by \eqref{eq:online_control_2}, according to~\eqref{eq:Lipschitz_const_PWA}.

Thus, once established that also the state-to-input mapping in \eqref{eq:online_control_1} determines a continuous \gls{PWA} partition of $\mc{S}$, we next ask whether we can develop a technique compatible with the one proposed in \cite{fischetti2018deep,jordan2020exactly}. In fact, from those works we know that both the output and the Lipschitz constant of a \gls{ReLU} network, which we will use to approximate a traditional control law $\Phi(\cdot)$, can themselves be computed through an \gls{MILP}. We therefore require a constructive methodology compatible with that in \cite{fischetti2018deep,jordan2020exactly}, 
which will also allow us to compute the key quantity characterizing the approximation error $e(\cdot)$ derived in \S \ref{subsec:stability}. The next result answers this question:

\begin{theorem}\label{th:MILP}
	Let $\alpha \in \{1, \infty\}$ and $\Phi(\cdot)$ be defined as in \eqref{eq:online_control_1}.   The following statements hold true:
	\begin{itemize}
		\item[(i)] The state-to-input mapping $x \mapsto \Phi(x)$  can be computed via an \emph{\gls{MILP}};
		\item[(ii)] The Lipschitz constant $\mc{L}_{\alpha}(\Phi, \mc{S})$ can be computed via an \emph{\gls{MILP}}. \hfill$\square$
	\end{itemize}
\end{theorem}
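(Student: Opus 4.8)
\textit{Proof sketch (plan).} The plan is to reduce both claims to standard mixed-integer reformulations of the piecewise-linear objects appearing in \eqref{eq:online_control_1} — the gauge function $\Psi(\cdot)$ and the complementarity conditions of the underlying strictly convex \gls{QP} — together with the piecewise-affine structure already furnished by Theorem~\ref{th:PWA}.

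For part (i), I would replace the parametric \gls{QP} \eqref{eq:online_control_1} by its \gls{KKT} system. First introduce a scalar variable $t$ and binaries $\delta\in\B^p$ with $\bsone^\top\delta=1$ and $F_{j,:}x\le t\le F_{j,:}x+M_\delta(1-\delta_j)$ for all $j\in\mc P$, so that $t=\Psi(x)$ at every feasible point; with $t$ encoded this way the constraints $\Psi(A_ix+B_iv)\le\lambda\Psi(x)$, $i\in\mc M$, together with $v\in\mc U$, become a finite system of linear inequalities $\bar Cv\le\bar d(x,t)$, affine in $(x,t,v)$ (explicitly, $F_{j,:}B_iv\le\lambda t-F_{j,:}A_ix$ and $Dv\le c$). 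Since $H\succ0$, the minimiser $v$ of \eqref{eq:online_control_1} is the unique $v$-component of any solution of the \gls{KKT} system $Hv+P^\top x+\bar C^\top\mu=0$, $\bar Cv\le\bar d(x,t)$, $\mu\ge0$, $\mu_i(\bar d_i(x,t)-\bar C_{i,:}v)=0$ for all $i$. The complementarity relations I would model with the usual big-$M$/SOS1 gadget: a binary $\sigma_i$ with $\mu_i\le M_\mu\sigma_i$ and $\bar d_i(x,t)-\bar C_{i,:}v\le M_s(1-\sigma_i)$. The constants $M_\mu,M_s$ are finite because $\mc S$ and $\mc U$ are compact (bounding the slacks) and, under Standing Assumption~\ref{standing:LICQ}, the optimal multipliers are uniformly bounded over $x\in\mc S$ and can be bounded a priori by auxiliary linear/mixed-integer programs, exactly as for \gls{ReLU} networks in \cite{jordan2020exactly,fabiani2021reliably}; feasibility for every $x\in\mc S$ is the \gls{CLF} property. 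The resulting constraint set is mixed-integer linear, every feasible point has $v=\Phi(x)$, and equipping it with any linear objective yields the claimed \gls{MILP}.

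For part (ii), the plan is to combine Theorem~\ref{th:PWA} with \eqref{eq:Lipschitz_const_PWA}: writing $\Phi(x)=G_ix+g_i$ on the polyhedral pieces $\mc F_i$ of $\mc S$ gives $\mc L_\alpha(\Phi,\mc S)=\max_i\|G_i^\top\|_\alpha$, which for $\alpha\in\{1,\infty\}$ is a maximum of sums of absolute values of entries of $G_i$ and is therefore \gls{MILP}-representable once the pieces can be ranged over implicitly. Instead of enumerating pieces, I would identify $G_i$ with the local Jacobian $\partial\Phi/\partial x$ on $\mc F_i$, which under Standing Assumption~\ref{standing:LICQ} is characterised by differentiating the active \gls{KKT} equations of the reformulation in part (i): adjoining matrix variables $G$ (Jacobian of $\Phi$), $\Lambda$ (Jacobian of $\mu$) and a vector $g^t$ (gradient of $t=\Psi$), the relations $g^t_k=\sum_j\delta_jF_{j,k}$, $HG+P^\top+\bar C^\top\Lambda=0$, $\Lambda_{i,:}=0$ whenever $\sigma_i=0$, and $\bar C_{i,:}G$ equal to the $x$-gradient of $\bar d_i(\cdot,t)$ whenever $\sigma_i=1$, all admit the same big-$M$/SOS1 linearisation against the binaries $(\delta,\sigma)$ already present in part (i). Coupling these constraints with those of part (i), now with $x$ a free variable in $\mc S$, ensures that every feasible point corresponds to a genuine $x\in\mc S$, to the piece containing it, and to the Jacobian of $\Phi$ there; maximising $\|G^\top\|_\alpha$ over this mixed-integer linear feasible set then returns $\mc L_\alpha(\Phi,\mc S)$.

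The step I expect to be the main obstacle is the correctness of the Jacobian encoding in part (ii): one must show that the piecewise-constant Jacobian of $\Phi$ is captured \emph{exactly} by the linearised \gls{KKT} system with the active set read off from the complementarity binaries — in particular that Standing Assumption~\ref{standing:LICQ} (no primal degeneracy, unique multipliers, a well-defined active set on the interior of each critical region, where strict complementarity holds, cf.\ \cite[Ch.~6]{borrelli2017predictive}) makes this characterisation well-posed, and that the multiplier sensitivities $\Lambda$ are a priori bounded so that finite big-$M$ constants exist. Everything else is routine mixed-integer modelling of ``max'' and complementarity operators, of precisely the kind used for \gls{ReLU} networks in \cite{jordan2020exactly}; this shared structure is what will let the two \glspl{MILP} above be composed with the network's \gls{MILP} model in \S\ref{sec:certificates}.
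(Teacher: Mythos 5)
Your proposal is correct in substance, but it reaches the two \glspl{MILP} by a genuinely different route from the paper, at least for part (i). The paper follows the sector decomposition of Theorem~\ref{th:PWA} literally: it sets up $p$ separate \glspl{mp-QP} of the form \eqref{eq:online_control_compact_2}, one per facet, and — because the $h$-th problem may be infeasible for $x$ outside $\mc{S}^{(h)}$ — softens each with nonnegative slacks $\epsilon^{(h)}$ under an exact quadratic-plus-polyhedral penalty, writes the \gls{KKT} system of each softened problem, big-$M$ linearises the complementarities, and finally selects the correct sector with a binary $\delta^{(h)}$ defined by $[\epsilon^{(h)}=0]\wedge[F_{h,:}x\ge F_{j,:}x\ \forall j]$, recovering $\Phi(x)=\sum_h\delta^{(h)}(z^{(h)}-H^{-1}P^\top x)$ via McCormick-type constraints. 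You instead avoid the per-sector replication entirely by encoding $t=\Psi(x)=\max_j F_{j,:}x$ exactly with one set of binaries and writing a single \gls{KKT} system whose constraints $F_{j,:}B_iv\le\lambda t-F_{j,:}A_ix$, $Dv\le c$ reproduce the original feasible set for every $x\in\mc S$; this buys you a smaller formulation and dispenses with the exact-penalty machinery (and with the paper's attendant worry that the softened objective is only strictly convex where the slacks vanish), at the price of having to argue that the active-set/multiplier structure of the monolithic program matches the per-sector one on which Standing Assumption~\ref{standing:LICQ} is stated — which it does, since on each $\mc S^{(h)}$ the constraints coincide with those of \eqref{eq:online_control_compact_2}. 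For part (ii) the two arguments are essentially the same: the paper flags the active set with binaries $\chi^{(h)}$ and recovers the local gain $K^{(h)}$ by perturbing $x$ along canonical basis directions under a frozen active set (delegating the norm maximisation to \cite[Prop.~5.2]{fabiani2021reliably}), while you write the differentiated \gls{KKT} equations for the Jacobian variables $(G,\Lambda,g^t)$ directly, gated by the same binaries — a cleaner algebraic rendering of the identical sensitivity idea. The obstacle you single out (exactness of the Jacobian characterisation and a priori bounds on the multiplier sensitivities for finite big-$M$ constants) is real, but it is exactly the point the paper also leaves to Standing Assumption~\ref{standing:LICQ} and to the cited results of \cite{fabiani2021reliably}, so it does not constitute a gap relative to the paper's own level of detail.
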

\begin{proof}
	(i) From Theorem~\ref{th:PWA} we know that each facet of $\mc{S}$ can be associated with an \gls{mp-QP} in the form \eqref{eq:online_control_compact_2}. For a given $x \in \mc{S}^{(h)} \subseteq \mc{S}$, however, the \glspl{mp-QP} associated with facets other than the $h$-th one may not be feasible. Consider the following modified  collection of $p$ optimization problems:
	\begin{equation}\label{eq:online_control_3}
		\forall h \in \mc{P} : \left\{
		\begin{aligned}
			&\underset{z^{(h)}, \epsilon^{(h)}}{\textrm{min}} && \tfrac{1}{2} \|z^{(h)}\|^2_H + \tfrac{1}{2} \|\epsilon^{(h)}\|^2_Q + \varrho \|\epsilon^{(h)}\|_{\alpha}\\
			& \hspace{.2cm}\textrm{ s.t. } && C z^{(h)} \leq d +  T^{(h)} x + \epsilon^{(h)},\\
			&&& \epsilon^{(h)} \geq 0,
		\end{aligned}
		\right.
	\end{equation}
	where we have made a change of variables $z^{(h)} \defeq v^{(h)} + H^{-1}P^\top x$ and defined $T^{(h)} \defeq S^{(h)} + C H^{-1}P^\top$ to simplify notation.   In addition, we have also introduced the nonnegative slack vector $\epsilon^{(h)} \in \R^{\ell + pM}$ to ensure the feasibility of \eqref{eq:online_control_3} for all $x$ and for all $h \in \mc{P}$.
	
	We include both a quadratic and a polyhedral norm penalty on $\epsilon^{(h)}$ in the problem \eqref{eq:online_control_3}.   The 2-norm weighting matrix $Q \in \mathbb{S}^{\ell +pM}_{\succ 0}$ can be chosen arbitrarily and serves to ensure that the objective function is strictly convex in $\epsilon^{(h)}$.   According to \cite[Ch.~14.3]{fletcher2013practical}, instead, the weighting term $\varrho > 0$ can always be chosen to be sufficiently large such that solutions to \eqref{eq:online_control_3} coincide with solutions to \eqref{eq:online_control_compact_2} (modulo change of variables) for all cases in which the latter is feasible, i.e., the cost function turns into an \emph{exact penalty function}.
	
	Supposing without loss of generality that $\alpha = \infty$ (a similar argument holds for the case $\alpha = 1$), each optimization problem in \eqref{eq:online_control_3} can be
	turned into an equivalent \gls{mp-QP}:
	\begin{equation}\label{eq:online_control_4}
		\forall h \in \mc{P} : \left\{
		\begin{aligned}
			&\underset{z^{(h)}, \epsilon^{(h)}, t^{(h)}}{\textrm{min}} && \tfrac{1}{2} \|z^{(h)}\|^2_H + \tfrac{1}{2} \|\epsilon^{(h)}\|^2_Q + t^{{(h)}}\\
			& \hspace{.4cm}\textrm{ s.t. } && C z^{(h)} \leq d +  T^{(h)} x + \epsilon^{(h)},\\
			&&& 0 \leq \varrho \epsilon^{(h)} \leq t^{(h)} \bsone.
		\end{aligned}
		\right.
	\end{equation}
	For each $h \in \mc{P}$, we now write the associated \gls{KKT} conditions with nonnegative Lagrange multipliers $\mu^{(h)}$, $\nu^{(h)}$, $\xi^{(h)} \geq 0$ of appropriate dimensions:
	\begin{equation}\label{eq:KKT}
		\forall h \in \mc{P} : \left\{
		\begin{aligned}
			& z^{(h)} + H^{-1} C^\top \mu^{(h)}  = 0,\\
			& \epsilon^{(h)} - Q^{-1} (\mu^{(h)} - \varrho \nu^{(h)} + \varrho \xi^{(h)}) = 0,\\
			& 1 - \bsone^\top \nu^{(h)} = 0,\\
			& C z^{(h)} - d -  T^{(h)} x - \epsilon^{(h)} \leq 0,\\
			& \varrho \epsilon^{(h)} - t^{(h)} \bsone \leq 0, \ \epsilon^{(h)} \geq 0,\\
			& 0 \leq \mu^{(h)} \perp (C z^{(h)} - d -  T^{(h)} x - \epsilon^{(h)}),\\
			& 0 \leq \nu^{(h)} \perp (\varrho \epsilon^{(h)} - t^{(h)} \bsone), \ 0 \leq \xi^{(h)} \perp \epsilon^{(h)}.\\
		\end{aligned}
		\right.
	\end{equation}
	Using standard \gls{MI} modelling techniques \cite{heemels2001equivalence}, we can then remodel the nonlinear complementarity conditions in \eqref{eq:KKT} with equivalent \gls{MI} linear inequalities by means of binary variables $\sigma^{(h)}$, $\eta^{(h)}$, $\theta^{(h)}$ of suitable dimensions. These must satisfy (componentwise) the conditions: $[\sigma_i^{(h)} = 1] \implies [-C_{i,:} z^{(h)} + d_i +  T_{i,:}^{(h)} x + \epsilon_i^{(h)} = 0]$, $[\eta_i^{(h)} = 1] \implies [ - \varrho \epsilon_i^{(h)} + t^{(h)} = 0]$, and $[\theta_i^{(h)} = 1] \implies [\epsilon_i^{(h)} = 0]$.   Substituting these conditions in place of the complementarity conditions in \eqref{eq:KKT} leaves us with:
	\begin{equation}\label{eq:KKT_2}
		\forall h \in \mc{P} : \left\{
		\begin{aligned}
			& z^{(h)} + H^{-1} C^\top \mu^{(h)}  = 0, \ \bsone^\top \nu^{(h)} = 1,\\
			& \epsilon^{(h)} - Q^{-1} (\mu^{(h)} - \varrho \nu^{(h)} + \varrho \xi^{(h)}) = 0,\\
			& 0 \leq  -C z^{(h)} + d +  T^{(h)} x + \epsilon^{(h)} \leq \bar{\upsilon} (\bsone - \sigma^{(h)}),\\
			& 0 \leq -\varrho \epsilon^{(h)} + t^{(h)} \bsone\leq \bar{\omega}(\bsone - \eta^{(h)})\\
			& 0 \leq \epsilon^{(h)} \leq \bar{\epsilon}(\bsone - \theta^{(h)}), \ 0 \leq \mu^{(h)} \leq \bar{\mu} \sigma^{(h)},\\
			& 0 \leq \nu^{(h)} \leq \bar{\nu} \eta^{(h)}, \ 0 \leq \xi^{(h)} \leq \bar{\xi} \theta^{(h)}.\\
		\end{aligned}
		\right.
	\end{equation}
	This \gls{MI} reformulation is always feasible for any $x$ due to to the inclusion of the soft constraint terms $\epsilon^{(h)}$.  Since all of the sets in the original problem were compact, we are assured of finding positive constants $\bar{\mu}$, $\bar{\nu}$, $\bar{\xi}$, $\bar{\omega}$, $\bar{\epsilon}$ and $\bar{\upsilon}$ to upper-bound the inequalities in \eqref{eq:KKT_2}.
	
	Since the cost function in \eqref{eq:online_control_4} is not strictly convex in $t^{(h)}$, it is not immediately obvious whether its parametric solution is single valued anywhere.   However, we note that the objective \emph{is} strictly convex over that part of its domain in which the soft constraint terms $\epsilon^{(h)}$, and consequently $t^{(h)}$, are zero, and that this region will include the set $\mc S^{(h)}$.   The parametric solution will therefore be single valued over this region of interest.
	
	Given some $x \in \mc S$, for all $h \in \mc{P}$ let us now assume to have available a vector $z^{(h)}$ that satisfies the system in \eqref{eq:KKT_2} together with assorted other auxiliary variables (i.e., $\mu^{(h)}$, $\nu^{(h)}$, $\epsilon^{(h)}$, and so on).	We observe that the input-to-state mapping obtained from \eqref{eq:online_control_2} can hence be equivalently expressed as 
	\begin{equation}\label{eq:Phi_partial}
		\Phi(x) = \sum_{h \in \mc{P}} \delta^{(h)} v^{(h)} = \sum_{h \in \mc{P}} \delta^{(h)} (z^{(h)} - H^{-1}P^\top x),
	\end{equation}
	where $\delta^{(h)} \in \B$ is a binary variable that is meant to identify the region $\mc{S}^{(h)}$ in which the current $x$ resides.	In particular, we shall associate $\delta^{(h)} = 1$ to the $h$-th \gls{QP} instance such that i) it is strictly feasible (i.e., it does not need the slack vector, $\epsilon^{(h)}$), and ii) it identifies the facet of $\mc{S}$ that produces the largest feasible set in \eqref{eq:online_control_2}. These considerations can be summarized into the logical implication $[\delta^{(h)} = 1] \iff [\epsilon_i^{(h)} = 0, \text{ for all } i \in \{1,\ldots,\ell+pM\}] \wedge [F_{h,:} x \geq F_{i,:} x \text{ for all } i \in \mc{P}\setminus \{h\}]$ componentwise.
	
	The first predicate $[\epsilon_i^{(h)} = 0, \text{ for all } i \in \{1,\ldots,\ell+pM\}]$, which on the other hand is implied by $[\theta_i^{(h)} = 1, \text{ for all } i \in \{1,\ldots,\ell+pM\}]$, can be encoded by a binary variable $\kappa^{(h)} \in \B$ so that $[\kappa^{(h)}=1]\iff[\theta_i^{(h)} = 1, \text{ for all } i \in \{1,\ldots,\ell+pM\}]$, an implication that can be equivalently rewritten in terms of integer linear constraints as follows:
	\begin{equation}\label{eq:multiple_AND}
		\forall h \in \mc{P} :\left\{
		\begin{aligned}
			&-\theta_i^{(h)} + \kappa^{(h)} \le 0, \text{ for all } i \in \{1,\ldots,\ell+pM\},\\
			& \sum_{i=1}^{\ell+pM} \theta_i^{(h)}  - \kappa^{(h)} \le \ell+pM-1.
		\end{aligned}
		\right.
	\end{equation}
	The second predicate $[F_{h,:} x \geq F_{i,:} x \text{ for all } i \in \mc{P}\setminus \{h\}]$, instead, involves a comparison among terms linear in $x$, and can be equivalently encoded by a binary variable $\pi^{(h)}\in\B$ subject to a collection of additional \gls{MI} linear inequalities through standard integer modelling techniques.
	In particular, by letting $h=p$ (i.e., the last index in $\mc P$) without loss of generality, and by focusing on some $i \in \mc{P}\setminus \{p\}$, we define a scalar binary variable $\varsigma^{(p)}_i \in \B$ satisfying the implication $[(F_{i,:} - F_{p,:}) x \leq  0] \implies [\varsigma^{(p)}_i=1]$, which simply translates into \gls{MI} linear inequality $(F_{i,:} - F_{p,:}) x \ge \varepsilon -  (1+\varepsilon)\varsigma^{(p)}_i$, where $\varepsilon>0$ is a small tolerance (typically, the machine precision), beyond which the constraint is considered violated. Once introduced all the $p-1$ \gls{MI} inequalities, we have to impose further that $[\pi^{(p)}=1]\iff[\varsigma_i^{(p)}=1,\text{ for all } i \in \mc{P}\setminus \{p\}]$, which translates identically as in \eqref{eq:multiple_AND}:
	$$
	\forall h \in \mc{P} :\left\{
	\begin{aligned}
		&-\varsigma_i^{(p)} + \pi^{(p)} \le 0, \text{ for all } i \in \mc{P}\setminus \{p\},\\
		& \sum_{j=1}^{p-1} \varsigma_i^{(p)} - \pi^{(p)} \le p-2.
	\end{aligned}
	\right.
	$$
	Putting everything together, we will thus end up with an implication $[\delta^{(h)} = 1] \iff [\kappa^{(h)} = 1] \wedge [\pi^{(h)} = 1]$ in which the logical AND produces integer linear constraints as in \eqref{eq:multiple_AND}, i.e.,
	\begin{equation}\label{eq:multiple_AND_2}
		\forall h \in \mc{P} :\left\{
		\begin{aligned}
			&-\kappa^{(h)} +\delta^{(h)} \le 0, \\
			&-\pi^{(h)} + \delta^{(h)} \le 0, \\
			&\kappa^{(h)} + \pi^{(p)} - \delta^{(h)} \le 1.
		\end{aligned}
		\right.
	\end{equation}
	Finally, by introducing $p$ auxiliary variables $q^{(h)} \in \R^m$ such that $[\delta^{(h)} = 0] \implies [q^{(h)} = 0]$, $[\delta^{(h)} = 1] \implies [q^{(h)} = z^{(h)} - H^{-1}P^\top x]$, we obtain $\Phi(x) = \sum_{h \in \mc{P}} q^{(h)}$ provided that a collection of additional \gls{MI} constraints is satisfied and, specifically \cite{bemporad1999control}
	\begin{equation}\label{eq:MI_add}
		\forall h \in \mc{P} : \left\{
		\begin{aligned}
			&\underline{z} \delta^{(h)} \leq q^{(h)} \leq \bar{z} \delta^{(h)},\\
			&z^{(h)} - H^{-1}P^\top x - \bar{z}(1 - \delta^{(h)}) \leq q^{(h)} \leq z^{(h)} - H^{-1}P^\top x - \underline{z}(1 - \delta^{(h)}),
		\end{aligned}
		\right.
	\end{equation}
	for lower and upper bounds $\underline{z}$, $\bar{z}$ on $z^{(h)} - H^{-1}P^\top x$, assured to exist since $z^{(h)}$ is a linear function of the bounded variable $\mu^{(h)}$ and $x \in \mc S$, a C-polytope.
	The first part of the proof hence follows by noting that all the continuous and binary auxiliary variables, introduced to express the state-to-input mapping as $\Phi(x) = \sum_{h \in \mc{P}} q^{(h)}$, obey to \gls{MI} linear constraints.
	
	(ii) As we have observed in the first part of the proof,  solving a collection of \gls{MI} linear constraints involving a number of continuous and binary auxiliary variables yields $\Phi(x) = \sum_{h \in \mc{P}} \delta^{(h)} (z^{(h)} - H^{-1}P^\top x)=\sum_{h \in \mc{P}} q^{(h)}$.
	In view of Theorem~\ref{th:PWA}, note that this additionally corresponds to an affine control policy $\Phi(x) = K(\kappa^{(h)},\sigma^{(h)})x+c(\kappa^{(h)},\sigma^{(h)})$, for some gain matrix $K(\kappa^{(h)},\sigma^{(h)})\in\R^{m\times n}$ and vector $c(\kappa^{(h)},\sigma^{(h)})\in\R^m$, which are uniquely (in view of Standing Assumption~\ref{standing:LICQ}) determined by the values $\kappa^{(h)}$ and $\sigma^{(h)}$ assume. In fact, those terms together identify the case in which the $h$-th instance is feasible (i.e., $\epsilon_i^{(h)} = 0$ for all $i$) and which one of the constraints in \eqref{eq:online_control_4} is active (i.e., $-C_{i,:} z^{(h)} + d_i +  T_{i,:}^{(h)} x = 0$). Note that we have needed to characterize explicitly neither $K(\cdot,\cdot)$ nor $c(\cdot,\cdot)$. To simplify exposition, in the rest of the proof we will employ $K$ and $c$, thus omitting the double argument.
	
	We now resort the \gls{KKT} systems in \eqref{eq:KKT_2}, which collect necessary and sufficient conditions characterizing both the feasibility and optimality of each \gls{QP} in \eqref{eq:online_control_4}. In fact, to compute $\mc{L}_{\alpha}(\Phi,\mc S)$, $\alpha \in \{1, \infty\}$ without explicit calculation of $\Phi(\cdot)$ across all of its partition regions, we first perturb $x$ along the canonical basis  vectors in $\R^n$ and consider how the optimal solutions to the collection of \glspl{QP} in \eqref{eq:online_control_4} vary, provided that the same set of active/inactive constraints is imposed, according to the binary vector $\sigma^{(h)}$. For all $h\in\mc P$, we then introduce real auxiliary variables $\{x^i, z^{(h),i}, \epsilon^{(h),i}, t^{(h),i}, \mu^{(h),i}, \nu^{(h),i}, \xi^{(h),i}, \eta^{(h),i}, \theta^{(h),i}\}$ for $i = 1, \ldots, n$, and additional \gls{MI} linear constraints as
	\begin{equation}\label{eq:KKT_perturbed}
		\forall h\in\mc P, i \in \{1, \ldots, n\} : \left\{
		\begin{aligned}
			& x^i = x + e^i,\\
			& z^{(h),i} + H^{-1} C^\top \mu^{(h),i}  = 0, \ \bsone^\top \nu^{(h),i} = 1,\\
			& \epsilon^{(h),i} - Q^{-1} (\mu^{(h),i} - \varrho \nu^{(h),i} + \varrho \xi^{(h),i}) = 0,\\
			& -\bar{\upsilon} (\bsone - \sigma^{(h)}) \leq  -C z^{(h),i} + d +  T^{(h)} x^i + \epsilon^{(h),i} \leq \bar{\upsilon} (\bsone - \sigma^{(h)}),\\
			& 0 \leq -\varrho \epsilon^{(h),i} + t^{(h),i} \bsone\leq \bar{\omega}(\bsone - \eta^{(h),i})\\
			& 0 \leq \epsilon^{(h),i} \leq \bar{\epsilon}(\bsone - \theta^{(h),i}), \ -\bar{\mu} \sigma^{(h)} \leq \mu^{(h),i} \leq \bar{\mu} \sigma^{(h)},\\
			& 0 \leq \nu^{(h),i} \leq \bar{\nu} \eta^{(h),i}, \ 0 \leq \xi^{(h),i} \leq \bar{\xi} \theta^{(h),i}.
		\end{aligned}
		\right.
	\end{equation}
	where $e^i \in \R^n$ is the $i$-th vector of the canonical basis. Note that the nonnegativity of both $-C z^{(h),i} + d +  T^{(h)} x^i + \epsilon^{(h),i}$ and $\mu^{(h),i}$ is relaxed to guarantee the existence of a solution to \eqref{eq:KKT_perturbed}, since $x^i$ may fall within a critical region i) originating from a possibly different \gls{QP} instance in \eqref{eq:online_control_4}, and ii) with active constraints that differ from those for $x$, identified by $\sigma^{(h)}$. In addition, note that only the state $x$, which serves as a parameter, is varied to obtain $x^i$, while the newly introduced $\{x^i, z^{(h),i}, \epsilon^{(h),i}, t^{(h),i}, \mu^{(h),i}, \nu^{(h),i}, \xi^{(h),i},\eta^{(h),i}, \theta^{(h),i}\}$ are additional decision variables, subject to the \gls{MI} linear constraints in \eqref{eq:KKT_perturbed}, which allow us to define
	$$
	\forall i \in \{1, \ldots, n\} : \Phi^i = \sum_{h \in \mc{P}} \delta^{(h),i}(z^{(h),i} - H^{-1}P^\top x^i).
	$$
	Each binary variable $\delta^{(h),i} \in \B$, for all $h\in \mc P$ and $i \in \{1, \ldots, n\}$, is introduced to parallel the effect of $\delta^{(h)}$ in the first part of the proof. Specifically, $\delta^{(h),i}$ is required to satisfy the implication $[\delta^{(h),i} = 1] \iff [\epsilon_j^{(h),i} = 0, \text{ for all } j \in \{1,\ldots,\ell+pM\}] \wedge [F_{h,:} x^i \geq F_{j,:} x^i \text{ for all } j \in \mc{P}\setminus \{h\}]$, and therefore allows one to identify only the control contribution $z^{(h),i} - H^{-1}P^\top x^i$ associated with that strictly feasible instance among those in \eqref{eq:KKT_perturbed} producing the largest feasible set. 
	Adopting an identical reformulation as in part i) of the proof thus leaves us with:
	\begin{equation}\label{eq:single_control}
		\forall i \in \{1, \ldots, n\} : \Phi^i = \sum_{h \in \mc{P}} q^{(h),i}.
	\end{equation}
	Now, observe that $\Phi^i$ may differ from $\Phi(x^i) = \Phi(x + e^i)$, since the active set encoded by $\sigma^{(h)}$ for some $x$ may differ from the active set at the perturbed point $x + e^i$, particular for those $x$ near the boundary of their partition.   However, it still holds that $\Phi^i = Kx^i+c$, and hence that
	\begin{align*}
		\left[ \Phi^1 \, \cdots \, \Phi^n \right]
		=K \left[ x^1 \, \cdots \, x^n \right] + c \otimes \bsone^\top
		&=K \left[x \otimes \bsone^\top + \left[ e^1 \, \cdots \, e^n \right]\right] + c \otimes \bsone^\top\\
		&= \left(K x + c\right) \otimes \bsone^\top + K
	\end{align*}	
	and we can isolate the gain term $K$ directly to obtain
	\begin{equation}\label{eq:gain_expression}
		K =  \left[ \Phi^1 \, \cdots \, \Phi^n \right]- \Phi(x)\otimes \bsone^\top=\left[ \sum_{h \in \mc{P}} (q^{(h),1}-q^{(h)}) \, \cdots \, \sum_{h \in \mc{P}} (q^{(h),n}-q^{(h)}) \right].
	\end{equation}	
	We have therefore constructed an expression for the controller gain in the critical region parametrized by some choice of $\sigma^{(h)}$ associated with that strictly feasible \gls{QP} in \eqref{eq:online_control_4} producing the largest feasible set, which can itself be computed numerically for any $x \in \mc{X}$. After combining all of the additional variables and constraints introduced in this part of the proof, we can finally apply the technical result in \cite[Prop.~5.2]{fabiani2021reliably} to construct \pgls{MILP} suitable to compute $\mc{L}_{\alpha}(\Phi, \mc{S})$, for some given $\alpha \in \{1, \infty\}$.
\end{proof}

\section{Certifying the approximation quality of ReLU-based controllers}\label{sec:certificates}
Once developed analytical tools to compute key quantities characterizing $\Phi(\cdot)$ as defined in \eqref{eq:vertex_law_1}--\eqref{eq:vertex_law_2}, or \eqref{eq:online_control_1}, we are then able to calculate exactly the worst-case approximation error to apply the reliability certificate in Proposition~\ref{prop:NN_stability}.

Since the existence of a polyhedral \gls{CLF} for a polytopic system is a necessary and sufficient condition for its exponential stabilizability inside $\mc{S}$ \cite[Prop.~7.39]{blanchini2015set}, and since $\Phi(\cdot)$ as defined by any of the methods described in \S\ref{sec:problem_description} makes $\Psi(\cdot)$ a Lyapunov function for the closed-loop dynamics in \eqref{eq:polytopic} with $u(x) = \Phi(x)$, we may expect that the \gls{ReLU} based policy should also be stabilizing when the approximation error $e(\cdot) = \Phi_\normaltext{\textrm{NN}}(\cdot) - \Phi(\cdot)$ is small enough.  Given the results in \S \ref{sec:geometric_char}, this error mapping stems from the difference of two continuous \gls{PWA} mappings, and so is also \gls{PWA} continuous \cite[Prop.~1.1]{gorokhovik1994pwa}. 
The error can therefore also be shown to be bounded and Lipschitz continuous on $\mc{S}$, and we can apply the result of \S \ref{subsec:stability} to find a condition under which uniform ultimate boundedness is guaranteed.

The following result provides an offline, \gls{MI} optimization-based procedure to compute exactly both the worst-case approximation error between the two \gls{PWA} mappings, $\bar{e}_{\alpha} \defeq \textrm{max}_{x \in \mc{S}} \ \|e(x)\|_\alpha$, and the associated Lipschitz constant over $\mc{S}$, $\mc{L}_\alpha(e, \mc{S})$, for $\alpha \in \{1, \infty\}$.
\begin{theorem}\label{th:error_comp}
	For $\alpha \in \{1, \infty\}$, the approximation error $e(\cdot)$ has the following properties:
	\begin{itemize}
		\item[i)] The maximal error $\bar{e}_{\alpha}$ can be computed by solving an \normaltext{\gls{MILP}};
		\item[ii)] The Lipschitz constant $\mc{L}_\alpha(e, \mc{S})$ can be computed by solving an \normaltext{\gls{MILP}}. \hfill$\square$
	\end{itemize}
\end{theorem}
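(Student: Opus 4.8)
The plan is to leverage the \gls{PWA} structure of $e(\cdot)$ established earlier and combine the \gls{MILP} encodings already available for each ingredient. Recall that $e(x) = \Phi_{\textrm{NN}}(x) - \Phi(x)$, where $\Phi_{\textrm{NN}}(\cdot)$ is a \gls{ReLU} network (hence \gls{PWA}, with an exact \gls{MILP} encoding of its input-output graph via the standard big-$M$ reformulation of each $\max$-activation), and $\Phi(\cdot)$ is one of the three \gls{PWA} control laws \eqref{eq:vertex_law_1}--\eqref{eq:online_control_1}. By \cite[Prop.~1.1]{gorokhovik1994pwa} the difference $e(\cdot)$ is \gls{PWA} and, being continuous on the compact polyhedral domain $\mc S$, it is Lipschitz with constant given by \eqref{eq:Lipschitz_const_PWA}. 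The key observation is that we never need to enumerate the common refinement of the two \gls{PWA} partitions explicitly: we only need a set of mixed-integer \emph{linear} constraints whose feasible set, projected onto $(x, e)$, coincides with the graph $\{(x, e(x)) : x \in \mc S\}$, together with a linear encoding of the local affine gain $G(x)$ of $e(\cdot)$ at $x$.

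First I would assemble the graph encoding. For the \gls{ReLU} part this is the construction of \cite[Th.~5]{jordan2020exactly} / \cite{fabiani2021reliably}: for each hidden neuron introduce a binary $\beta^j_k$ flagging whether $W^j_{k,:} x^j + b^j_k \ge 0$, and big-$M$ inequalities giving $x^{j+1}_k = \max(W^j_{k,:}x^j + b^j_k, 0)$ exactly, using finite preactivation bounds that exist because $\mc S$ is compact. For the controller part I distinguish cases: if $\Phi(\cdot)$ is the simplicial law \eqref{eq:vertex_law_1} or the \gls{QP}-based law \eqref{eq:vertex_law_2}, the \gls{MILP} encoding of its \gls{PWA} graph follows from \cite{fabiani2021reliably} (the \gls{mp-QP} \gls{KKT} system plus big-$M$ complementarity); if $\Phi(\cdot)$ is the minimal selection law \eqref{eq:online_control_1}, I invoke Theorem~\ref{th:MILP}.(i), which already produces exactly such an encoding, $\Phi(x) = \sum_{h\in\mc P} q^{(h)}$ subject to \gls{MI} linear constraints. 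Concatenating the two blocks of constraints and adding $e = \Phi_{\textrm{NN}}(x) - \Phi(x)$, the worst-case error is then
\begin{equation*}
\bar e_\alpha = \max \ \{\|e\|_\alpha : x \in \mc S, \ (x, \Phi_{\textrm{NN}}(x), \Phi(x), e) \text{ satisfy the above \gls{MI} linear constraints}\}.
\end{equation*}
Since $\alpha \in \{1,\infty\}$, the objective $\|e\|_\alpha$ is piecewise-linear and admits the usual epigraph linearization (auxiliary scalar $t$ with $\pm e_i \le t$ for $\alpha = \infty$, or $\sum_i s_i$ with $\pm e_i \le s_i$ for $\alpha = 1$), so part (i) is a genuine \gls{MILP}. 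This proves statement i).

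For statement ii) I follow the template of Theorem~\ref{th:MILP}.(ii) and \cite[Th.~5.3, Prop.~5.2]{fabiani2021reliably}. The idea is that on each cell of the (common refinement) partition, $e$ is affine with gradient $G(x) = G_{\textrm{NN}}(x) - G_\Phi(x)$, where $G_{\textrm{NN}}(x)$ is the active linear map of the network (the product $W^L \prod_{j} \mathrm{diag}(\beta^j) W^j$, expressible via the neuron-activation binaries with big-$M$ products) and $G_\Phi(x)$ is the active gain of the controller — for \eqref{eq:online_control_1} this is the matrix $K^{(h)}(\chi^{(h)})$ from Theorem~\ref{th:MILP}.(ii), and for \eqref{eq:vertex_law_1}--\eqref{eq:vertex_law_2} the corresponding active-set gain from \cite{fabiani2021reliably}. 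Introduce a perturbation argument: evaluate the encoded affine behaviour at $x$ and at $x + \varepsilon \mathrm{e}_i$ for each canonical basis vector while \emph{freezing} the active-constraint/activation binaries, which reads off the columns of $G(x)$ as \gls{MI}-linear functions of those binaries. Then, by \eqref{eq:Lipschitz_const_PWA}, $\mc L_\alpha(e,\mc S) = \max_{x \in \mc S} \|G(x)^\top\|_\alpha$, and the induced matrix $1$- and $\infty$-norms are again maxima of absolute row/column sums, hence piecewise-linear and epigraph-linearizable. Wrapping this into \cite[Prop.~5.2]{fabiani2021reliably} gives the required \gls{MILP}.

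The main obstacle I anticipate is bookkeeping around well-posedness rather than any deep difficulty: one must be careful that freezing the binaries indeed isolates a single affine piece, which relies on Standing Assumption~\ref{standing:LICQ} for the controller side and on the (mild, generically satisfied) assumption from \cite{jordan2020exactly} on the network side; one must also handle cell boundaries, where $G(x)$ is multivalued, by noting that the supremum in \eqref{eq:Lipschitz_const_PWA} is attained in the interior of some cell so the boundary ambiguity is harmless. A secondary technical point is constructing valid finite big-$M$ constants for the products of activation matrices and for the controller gains; these exist because $\mc S$ is a C-polytope and all the \gls{KKT}/activation variables were already shown to live in bounded sets in the proofs of Theorem~\ref{th:MILP} and in \cite{fabiani2021reliably}, but propagating explicit bounds through the matrix products requires some care. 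Once these are in place, both parts reduce to minimizing/maximizing a piecewise-linear objective over a bounded mixed-integer polyhedral feasible set, i.e., an \gls{MILP}.
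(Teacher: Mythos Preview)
Your proposal is correct and follows essentially the same route as the paper: encode $\Phi_{\textrm{NN}}$ via the standard \gls{ReLU} big-$M$ reformulation, encode $\Phi$ via the \gls{KKT}/active-set machinery already developed (directly for \eqref{eq:vertex_law_1}--\eqref{eq:vertex_law_2}, via Theorem~\ref{th:MILP} for \eqref{eq:online_control_1}), subtract, and invoke \cite[Prop.~5.2]{fabiani2021reliably} for the norm maximization in both parts. The paper's own proof is terser and omits the bookkeeping you flag (big-$M$ bounds, boundary cells, freezing binaries), but the argument is the same.
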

\begin{proof}
	i) This part follows by observing that the control action produced by the stabilizing policy $\Phi(\cdot)$ is linear in the state $x$ (if defined as in \eqref{eq:vertex_law_1}), or it can be manipulated to obtain a linear function of variables satisfying a collection of state-dependent \gls{MI} linear inequalities (if defined as in \eqref{eq:vertex_law_2}, or \eqref{eq:online_control_1} from Theorem~\ref{th:PWA}.(i)). This is consistent with the results in \cite{fischetti2018deep,jordan2020exactly} characterizing the output produced by a \gls{ReLU} network in general position as in \eqref{eq:RELU_NN}, which can be encoded likewise. Thus, the proof concludes by relying on \cite[Prop.~5.2]{fabiani2021reliably} after noting that, in all the considered cases, a vector norm maximization problem is a special case of the matrix maximization one.
	
	\sloppy ii) Since $e(\cdot)$ is a continuous \gls{PWA} mapping \cite[Prop.~1.1]{gorokhovik1994pwa}, we have $\mc{L}_\alpha(e, \mc{S}) = \textrm{max}_{x \in \mc{S}} \ \|K_{e}(x)\|_\alpha = \textrm{max}_{x \in \mc{S}} \ \|K_\textrm{NN}(x) - K(x)\|_\alpha$,
	where $K_\textrm{NN}(\cdot)$ denotes the local linear gain of the \gls{ReLU} network in \eqref{eq:RELU_NN}, and $K(\cdot)$ that of the \gls{PWA} controller in \eqref{eq:vertex_law_1}, \eqref{eq:vertex_law_2} or \eqref{eq:online_control_1}. While the linear gains for \eqref{eq:vertex_law_1} are known once the simplicial partition of $\mc{S}$ has been fixed, the ones characterizing \eqref{eq:vertex_law_2} and \eqref{eq:online_control_1} (from Theorem~\ref{th:MILP}.(ii)) can be obtained as a linear expression of variables subject to \gls{MI} linear constraints. This is again compatible with the way of expressing the Jacobian of $F(\cdot)$ over $\mc{S}$ \cite[Appendix~D]{jordan2020exactly}, and hence $\textrm{max}_{x \in \mc{S}} \ \|K_\textrm{NN}(x) - K(x)\|_\alpha$ can be suitably recast into the framework of \cite[Prop.~5.2]{fabiani2021reliably}, showing that the norm of a matrix whose entries are affine in the state variable can be computed through \pgls{MILP}.
\end{proof}
The first quantity is precisely of the type required to apply the stability result of \S \ref{subsec:stability}, thus supplying a condition on the optimal value of an \gls{MILP} sufficient to certify the uniform ultimate boundedness of the closed-loop system \eqref{eq:polytopic} under the action of $\Phi_\textrm{NN}(\cdot)$, obtained by suitably training a \gls{ReLU} network to replicate $\Phi(\cdot)$. 
Whether or not $\bar{e}_{\alpha}$ can be made sufficiently small to meet that certificate depends on the amount and quality of training data, as well as on the complexity of the network (i.e., number of neurons and layers). These considerations will be further explored in the next section.

\vspace{-.8cm}
{
	\section{Constraints satisfaction, network complexity and discussion}
}

We now discuss several practical aspects related with the design of a \gls{ReLU} network to compute a controller approximation $\Phi_{\textrm{NN}}(\cdot)${, also including further numerical simulations and discussing potential practical limitations of the proposed approach}.

\bigskip

\subsection{On the input constraints satisfaction}\label{subsec:input_cons}
We now discuss several options to force our \gls{PWA-NN} controller to meet input constraints, i.e., $\Phi_{\normaltext{\textrm{NN}}}(x) \in \mc U$ for all $x \in \mc S$. Note that none of the techniques we are about to introduce affect the results developed in the paper, as they hold true for any trained \gls{ReLU} network regardless the training methodology employed and the post-training performed.

\subsubsection{Structural fulfilment} 
To start, we note that designing a control proxy $\Phi_{\textrm{NN}}(\cdot)$ satisfying input constraints can even be enforced during the derivation of a traditional controller with any of the strategies presented in \S \ref{sec:problem_description}. In fact, by focusing on the augmented state variable $\hat{x} \defeq \col(x, u)$, we obtain the following discrete-time uncertain dynamics
\begin{equation}\label{eq:augmented_dyn}
	\begin{aligned}
		\hat{x}^+ &=
		\begin{bmatrix}
			A(w) & 0\\
			0 & 0
		\end{bmatrix} \hat{x} + \begin{bmatrix}
			B(w)\\
			I
		\end{bmatrix} u\\
		&= \Bigg( \sum_{i \in \mc M} w_i
		\underbracket{\begin{bmatrix}
				A_i & 0\\
				0 & 0
		\end{bmatrix}}_{\rdefeq \hat A_i} \Bigg) \hat{x} + \Bigg( \sum_{i \in \mc M} w_i \underbracket{\begin{bmatrix}
				B_i\\
				I
		\end{bmatrix}}_{\rdefeq \hat B_i}\Bigg) u,
	\end{aligned}
\end{equation}
which still enjoys a polytopic structure in view of the properties of the set $\mc W \ni w$. This hence enables us to incorporate input constraints as state ones directly, since $\hat{x} \in \mc X \times \mc U$. 
Thus, mimicking the mathematical developments in \S \ref{subsec:stability} with \eqref{eq:augmented_dyn} in place of \eqref{eq:polytopic}  allows one to derive a (possibly more conservative) bound depending on $\zeta$ to conclude on the robust positively invariance of some joint set $\hat{\mc S} \subseteq \mc S \times \mc{U}$ for the underlying perturbed dynamics, and hence for \eqref{eq:augmented_dyn} with some control surrogate $\Phi_{\textrm{NN}}(\cdot)$. 

\subsubsection{Training and output verification}
Another possibility to enforce input constraints requires one to design a tailored training phase for the \gls{NN} at hand. In particular, the generation of adversarial examples and traditional output verification techniques have been widely employed to produce training signals able to guide the network's output to meet specific requirements \cite{mirman2018differentiable,wong2018provable}. Along the same line, procedures to determine the weights of a \gls{NN}, tailored for the satisfaction of polyhedral input constraints \cite{markolf2021polytopic}, and based on the systematic manipulation of the gradient of $\Phi_\textrm{NN}(\cdot)$ when samples approaches the boundary of the feasible set \cite{ICLR16_hausknecht}, were given. Once the \gls{NN} is trained, instead, one may also resort on reachable set-based approaches \cite{NEURIPS2018_be53d253,9301422,karg2020stability}  to verify whether or not a \gls{NN} output falls within a desired set. For instance, in \cite{9301422} was shown that, due to the specific properties of \glspl{ReLU}, such a verification step simply amounts to solve a convex program.

\subsubsection{Surrogate controller deployment}
We finally suggest an a-posteriori way to enforce input constraints, i.e., during the controller deployment, which is achieved through a systematic projection onto the set $\mc U$ of feasible inputs (a saturation, in case of box constraints). 
In fact, by noting that the stability analysis involving the perturbed system \eqref{eq:perturbed_dyn} holds for any state-uncertainty-dependent disturbance $d(x,w) = B(w)(\Phi(x) - \Phi_{\textrm{NN}}(x))$, bounded in some norm $\alpha \in \{1, \infty\}$ by $\bar{e}_\alpha$ due to Proposition~\ref{prop:NN_stability}.(i), and in view of the nonexpansiveness of the projection mapping \cite[Cor.~12.20]{rockafellar2009variational}, we have
$$
\begin{aligned}
	&\textrm{max}_{x \in \mc{X}}  \, \|\Phi_{\textrm{NN}}(x) - \Phi(x)\|_\alpha \le \bar{e}_\alpha,\\
	& \implies \|\textrm{proj}_{\mc U}(\Phi_{\textrm{NN}}(x) - \Phi(x))\|_\alpha \le \bar{e}_\alpha, \text{ for all } x \in \mc{X},\\
	& \implies \|\textrm{proj}_{\mc U}(\Phi_{\textrm{NN}}(x)) - \Phi(x)\|_\alpha \le \bar{e}_\alpha, \text{ for all } x \in \mc{X},
\end{aligned}
$$
where we also exploited the relation $\Phi(x) = \textrm{proj}_{\mc U} (\Phi(x))$. This implies that the theory developed in \S \ref{subsec:stability} actually supports the safe implementation of $u(x) = \textrm{proj}_{\mc U} (\Phi_{\textrm{NN}}(x))$ while guaranteeing the closed-loop stabilization of the considered polytopic system in \eqref{eq:polytopic}.

\subsection{Bounding the complexity of \gls{PWA} neural-network controllers}\label{subsec:complexity}
The \gls{NN} complexity, characterized by its \emph{depth} $L$ (the number of hidden layers) and \emph{width} $\bar{N}$ (the number of neurons, for simplicity identical through the layers) plays a fundamental role to reproduce a given $\Phi(\cdot)$ as faithfully as possible. Given the intrinsic nature of a \gls{ReLU} network, however, whose output is known to be susceptible to overparametrization \cite{neyshabur2018the}, we aim at exploiting the geometry of the proposed approximation to bound the required complexity of $\Phi_{\textrm{NN}}(\cdot)$ in \eqref{eq:RELU_NN}, with the goal of keeping it reasonably low.

It is well-known that a \gls{ReLU} network with one hidden layer can approximate arbitrarily well any continuous function over a compact subset of $\R^n$ with a sufficient number of neurons \cite{cybenko1989approximation,hornik1989multilayer}. Specifically, this type of structure produces an input-output mapping in the so-called \emph{canonical piecewise-linear representation} \cite[Def.~2]{chua1988canonical}. However, unless $\Phi(\cdot)$  possesses certaint structural properties a \gls{ReLU} network of finite complexity is insufficient for \emph{exact} reproduction purposes \cite{petersen2018optimal}, even in the case of the simple vertex control law in \eqref{eq:vertex_law_1} resulting from a simplicial partition of $\mc{S}$ as in Fig.~\ref{fig:vert_simplices}. Let us consider indeed the following key concepts:

\begin{definition}\textup{(Order of intersections \cite{kahlert1992complete})}\label{def:intersections_order}
	Given a partition of $\mc{S} \subset \R^{n}$, an $(n-1)$-dimensional hyperplane (boundary) is said to be a first-order intersection, denoted by $S^{(1)}$. A linear manifold of dimension $n-k$ in $\mc{S}$ is a \emph{$k$-th order intersection $S^{(k)}$} if it is the intersection of at least two linear manifolds of type $S^{(k-1)}$, i.e., $S^{(k)} \defeq \cap_{i=1}^{\geq 2} S_{i}^{(k-1)}$.
	\hfill$\square$
\end{definition}
\begin{definition}\textup{(Consistent variation property \cite{chua1988canonical})}\label{def:consistent_variation}
	\sloppy A \normaltext{\gls{PWA}} mapping $F: \mc{F} \to \R^m$ possesses the \emph{consistent variation property} if for each boundary $q$, parametrized as $\set{x \in \R^n}{\alpha_{q}^\top x + \beta_{q} = 0}$, there exist constant vectors $\gamma_{q}, \delta_{q} \in \R^{m}$ such that, for any pair of components $F_{p}(x) = G_{p} x+ g_{p}$ and $F_{p'}(x) = G_{p'} x+ g_{p'}$ defined in two regions separated by such a boundary, it holds that $G_{p}-G_{p'}=\gamma_{q} \alpha_{q}^\top$ and $g_{p}-g_{p'}=\delta_{q}$.
	\hfill$\square$
\end{definition}
The consistent variation property is a necessary and sufficient condition ensuring that a canonical piecewise-linear representation of a \gls{PWA} mapping \cite[Th.~1]{chua1988canonical} will exist. Our controllers $\Phi(\cdot)$ are therefore only amenable to exact representation by means of a \gls{ReLU} network with one hidden layer when they possess this property, which is generally a difficult requirement to ensure. In $\R^2$, for instance, the consistent variation property is violated if the highest order of intersection $k = n = 2$ is generated by more than two hyperplanes intersecting in one point \cite[Ex.~3]{chua1988canonical}, thereby producing a \emph{degenerate} partition. Therefore, none of the cases in Fig.~\ref{fig:vert_simplices} or \ref{fig:th_PWA} satisfy such a condition. 
The result given next allow us to exploit the geometry of our approximation problem to provide a bound on the complexity of the \gls{ReLU} network in \eqref{eq:RELU_NN}:

\begin{proposition}\label{prop:complexity_bound}
	Let $N^r$ be the total number of regions produced by $\Phi(\cdot)$. To reproduce $\Phi(\cdot)$ exactly, {it is necessary that the \normaltext{\gls{ReLU}} network in \eqref{eq:RELU_NN} is composed of at most $L = \lceil \log_2 (n+1) \rceil$ hidden layers with $\bar{N} \ge n$ neurons each, where $\bar{N}$ satisfies:
\begin{equation}\label{eq:neurons_condition}
	 \left[\sum_{i=0}^{n} \binom{\bar N}{i}\right]^{\lceil \log_2 (n+1) \rceil} \geq N^r.
\end{equation}
}
\hfill$\square$
\end{proposition}
\begin{proof}
{
	From \cite[Th.~2.1]{arora18understanding}, we know that every continuous \gls{PWA} mapping can be represented by a \gls{ReLU} network with at most $\lceil \log_2 (n+1) \rceil$ layers.
	 Moreover, from \cite[Th.~3.13]{montufar2022sharp}, a \gls{ReLU} network with $n$ inputs, $m$ outputs and $L=\lceil \log_2 (n+1) \rceil$ hidden layers of width $n_j \geq n$, for all $j \in \left\{1, \ldots, L\right\}$, produces the following number of affine regions:
	 $$
		 \prod_{j=1}^{\lceil \log_2 (n+1) \rceil} \sum_{i=0}^{e_j} \binom{n_j}{i},
 	$$
 	where $e_j \defeq \textrm{min} \{n, n_1, \ldots, n_{j-1}\} = n$, $j \in \{1, \ldots, \lceil \log_2 (n+1) \rceil\}$.
	Thus, the relation in \eqref{eq:neurons_condition} directly follows by comparison with the total number of regions characterizing $\Phi(\cdot)$, $N^r$, and by considering hidden layers of fixed width $n_j = \bar{N} \ge n$ so that $e_j=n$, for all $j \in \{1, \ldots, \lceil \log_2 (n+1) \rceil\}$, for the \gls{ReLU} network in \eqref{eq:RELU_NN}.
}
\end{proof}
\begin{table}[!t]
	\caption{Complexity bounds for the examples from the literature}\label{tab:examples_complexity}
	\centering
	\begin{tabular}{cccccc}
		\toprule  
		Ex. & $N^r$ & $L ={\lceil \log_2 (n+1) \rceil}$ & $\bar{N}$ & $\{n_j\}_{j = 1}^{L}$\\
		\midrule
		(a)   & $14$ & \multirow[c]{2}{*}{$2$} & ${2}$ & ${\{2,2\}}$\\
		(f) & $21$ & & ${3}$ & ${\{3,2\}}$\\
		\midrule
		(b)  & $12$ & \multirow[c]{2}{*}{$2$} & ${2}$ & ${\{2,2\}}$\\
		(g)  & $26$ & & ${3}$ & ${\{3,2\}}$\\
		\midrule
		(c) & $48$ & \multirow[c]{2}{*}{$2$} & ${3}$ & ${\{3,3\}}$\\
		(h)  & $61$ & & ${4}$ & ${\{4,3\}}$\\
		\midrule
		(d) &  $24$ & \multirow[c]{2}{*}{$2$} & ${3}$ & ${\{3,2\}}$\\
		(i)  & $40$ & & ${3}$& ${\{4,2\}}$\\
		\midrule
		(e)  & $38$ & \multirow[c]{2}{*}{$2$} & ${3}$ & ${\{4,2\}}$\\
		(j) & $66$ & & ${4}$ & ${\{4,3\}}$\\
		\bottomrule
	\end{tabular}
\end{table}

\begin{continuance}{ex:numerics}
	For all the examples in Tab.~\ref{tab:examples}, we have $n = 2$ and $m = 1$, while $\bar{k} = 2$, which coincides with the dimension of the state-space. Thus, to reproduce any of our controllers $\Phi(\cdot)$ exactly, the \normaltext{\gls{ReLU}} network in \eqref{eq:RELU_NN} should have $L = {\lceil \log_2 3 \rceil = 2}$ hidden layers with at least $\bar{N} \geq 2$ neurons each corresponding to the numerical values reported in the second-last column of Tab.~\ref{tab:examples_complexity}, obtained by solving \eqref{eq:neurons_condition}. Finally, the last column shows that relaxing the fixed width condition of the hidden layers may result in a tighter difference between the number of regions produced and the required ones (e.g., cases {\normaltext{(g)}} and {\normaltext{(f)}}), or a  smaller number of total neurons (e.g., cases {\normaltext{(d)}} and {\normaltext{(j)}}).
	\hfill$\square$
\end{continuance}


{
\subsection{Further numerical results and discussion}\label{subsec:limitations}
We use the examples in Tab.~\ref{tab:examples} to verify Theorem~\ref{th:error_comp} numerically, also discussing potential limitations of the proposed methodology to design fast \gls{ReLU}-based proxies for ultimate boundedness control of polytopic systems.  Simulations are run in Matlab using Gurobi \cite{gurobi} as an \gls{MILP} solver on a laptop with a Quad-Core Intel i5 2.4 GHz CPU and 8 Gb RAM. The \gls{ReLU} networks are all trained by adopting a Levenberg–Marquardt algorithm with mean squared normalized error as a performance function.

\begin{table*}[!t]
	\caption{Numerical results for the examples in Tab.~\ref{tab:examples}}\label{tab:num_results}
	\centering
	\begin{tabular}{ccccccc}
		\toprule
		Ex. & $|\mc P|$ & $\bar{e}_\infty$ & $\zeta$ & CPU time [s] & $b$\\
		\midrule
		(a) &\multirow[c]{2}{*}{$6$}&  $7.09\times10^{-4}$ &  \multirow[c]{2}{*}{$0.7085$} & $77.6370$ & $0.0098$\\
		(f) &  & $9.15\times10^{-4}$ & & $75.9037$ & $0.0126$\\
		\midrule
		(b) &\multirow[c]{2}{*}{$8$}& $9.75\times10^{-4}$ &  \multirow[c]{2}{*}{$0.0799$} & $79.3175$ & $0.0122$\\
		(g) &  & $9.77\times10^{-4}$ && $79.0021$ & $0.0134$\\
		\midrule
		(c) &\multirow[c]{2}{*}{$30$}& $0.0064$ &  \multirow[c]{2}{*}{$0.0832$} & $161.0301$ & $0.0770$\\
		(h) &  & $0.0071$ && $163.5892$ & $0.0854$\\
		\midrule
		(d) &\multirow[c]{2}{*}{$8$}& $0.0240$ &  \multirow[c]{2}{*}{$0.9448$} & $81.6634$ & $0.0254$\\
		(i) &  & $0.0255$ && $80.0135$ & $0.0270$\\
		\midrule
		(e) &\multirow[c]{2}{*}{$18$}& $0.0059$ &  \multirow[c]{2}{*}{$0.0726$} & $112.8425$ & $0.0812$\\
		(j) &  & $0.0063$ && $112.3749$ & $0.0867$\\
		\bottomrule
	\end{tabular}
\end{table*}

To set the complexity of each \gls{ReLU} network, we have followed the indications reported in the corresponding rows of Tab.~\ref{tab:examples_complexity}, where we have considered the case of equally distributed neurons across the $L$ hidden layers. Note that, in accordance to Tab.~\ref{tab:examples_complexity} indeed, we are implicitly designing minimum complexity \gls{ReLU} networks with the same number of layers $L=2$, each of them composed of almost comparable number of neurons (from $2$ to $4$). As a consequence, the training time required to determine the values of the underlying \gls{NN} weights is almost equivalent for all the considered configurations (i.e., from $8.37$ to $9.89$[s]).

For all the considered examples, the upper bound $\zeta$ is computed from \eqref{eq:WC_bound} with $\rho = 0.9999$. The numerical results obtained are summarized in Tab.~\ref{tab:num_results} where we have considered the case $\alpha = \infty$, trained each \gls{ReLU} network with $50 \times 10^3$ samples obtained from the \gls{PWA}, minimal selection-based control law expressed in \eqref{eq:online_control_2} with $P=0$ and $H=I$. 

By analyzing the results in Tab.~\ref{tab:num_results} -- specifically, by contrasting the third and fourth column -- we notice that we have always succeeded in the design of a minimum complexity, stabilizing \gls{ReLU}-based surrogate $\Phi_{\textrm{NN}}(\cdot)$ of $\Phi(\cdot)$ in \eqref{eq:online_control_2} for all the considered cases, i.e., Ex.~(a)--(j). In particular, the resulting values for $\bar{e}_\infty$ suggest that the neighbourhood of the origin we are assured to reach with $\Phi_{\textrm{NN}}(\cdot)$ can be made very small in practise, certifying to ultimately bounding the system state in a set up to $99.02\%$ smaller than the original volume of the control invariant set $\mc S$ (values for $b$, last column). Note that the obtained results can, in principle, be further improved by adding extra layers or neurons in the architecture underlying $\Phi_{\textrm{NN}}(\cdot)$ -- this may come at the price of slightly increasing both the training time and the time required for computing $\bar{e}_\infty$.

On the other hand, a well-known drawback of \gls{MI} optimization is poor scalability with increasing problem size. While it has already been observed in \cite{fabiani2021reliably} that the computation time of the worst-case error $\bar{e}_\infty$ is only weakly dependent on the state dimension $n$, whereas particularly sensitive to the number of inputs $m$, here we want to analyze the impact the number of facet of $\mc{S}$ has on our methodology. This is particularly relevant since, compared to the technique presented in \cite{fabiani2021reliably}, i) one needs to consider several \gls{KKT} systems as in \eqref{eq:KKT}, precisely one for each facet of $\mc S$, i.e., $h \in \mc P$, and ii) the expression for $\Phi(\cdot)$ in \eqref{eq:Phi_partial} depends on the summation of a bunch of continuous/binary variables over the facet $h \in \mc P$. As a direct consequence of both of these facts, one hence has to include in the derivation of the resulting \gls{MILP} a non-negligible number of additional \gls{MI} auxiliary variables (Theorem~\ref{th:MILP}). Accordingly, the complexity of the polyhedral representation of $\mc S$ plays a key role in the computational analysis of our approach. With this regard, Tab.~\ref{tab:num_results} indicates that, as expected, the computation time is indeed quite sensitive to the number of facet characterizing the representation of the polyhedral control invariant set $\mc S$, as is evident by contrasting the fifth column for scenarios (a)/(b)/(d)--(e), and (e)--(c). In addition, we recall that bottom rows of each pair of examples are associated with a richer \gls{PWA} partition -- see Fig.~\ref{fig:th_PWA}. Thus, it is clear that the performance, in terms of approximation quality, are comparable and do not deteriorate too much for each pair of examples, along with the computation time, which is almost equivalent for both the considered pair of cases ((a)--(f), (b)--(g), and so on).

In conclusion, the non-negligible offline computational efforts exhibited by our certification  method, which do not scale particularly well with the complexity of the polyhedral representation of the control invariant set $\mc S$, as expected, enables the design of stabilizing \gls{ReLU}-based controllers with minimum complexity, which can be implemented and evaluated on dedicated hardware up to tens of MHz \cite{zhang2019real,schindler2020real}.
}

\section{Conclusion}
We have considered the design of \gls{ReLU}-based approximations of traditional controllers for polytopic systems, enabling implementation even on very fast embedded control systems. We have shown that our reliability certificate require one to construct and solve an \gls{MILP} offline, whose associated optimal value characterizes a quantity of the approximation error. While a variable structure controller enjoys a nice geometric structure by construction, this is not immediate for a (minimal) selection-based policy. We have hence shown that the latter also introduces a state-to-input \gls{PWA} mapping, and provided a systematic way to encode both its output and the maximal gain through binary and continuous variables subject to \gls{MI} constraints. This optimization-based result is compatible with existing results from the machine learning literature on computing the output of a trained \gls{ReLU} network. Taken together they provide a sufficient condition to assess the reliability, in terms of uniform ultimate boundedness of the closed-loop system, of a given \gls{ReLU}-based approximation traditional controllers for uncertain systems.
{In the spirit of \cite{fabiani2021reliably}, future work will focus on the derivation of suitable certificates, possibly involving the Lipschitz constant of the approximation error, $\mc{L}_\alpha(e, \mc S)$, to recover full exponential stabilization of polytopic systems with neural network-based control surrogates.}

\bibliographystyle{plain}
\bibliography{23_IJRNC}

\begin{thebibliography}{10}

\bibitem{arora18understanding}
R.~Arora, A.~Basu, P.~Mianjy, and A.~Mukherjee.
\newblock Understanding deep neural networks with rectified linear units.
\newblock In {\em 6th International Conference on Learning Representations,
  {ICLR} 2018}. OpenReview.net, 2018.

\bibitem{athanasopoulos2010invariant}
N.~Athanasopoulos and G.~Bitsoris.
\newblock Invariant set computation for constrained uncertain discrete-time
  linear systems.
\newblock In {\em 49th IEEE Conference on Decision and Control (CDC)}, pages
  5227--5232. IEEE, 2010.

\bibitem{aubin2012differential}
J.-P. Aubin and A.~Cellina.
\newblock {\em Differential inclusions: {S}et-valued maps and viability
  theory}, volume 264.
\newblock Springer Science \& Business Media, 2012.

\bibitem{bemporad1999control}
A.~Bemporad and M.~Morari.
\newblock Control of systems integrating logic, dynamics, and constraints.
\newblock {\em Automatica}, 35(3):407--427, 1999.

\bibitem{bemporad2002explicit}
A.~Bemporad, M.~Morari, V.~Dua, and E.~N. Pistikopoulos.
\newblock The explicit linear quadratic regulator for constrained systems.
\newblock {\em Automatica}, 38(1):3--20, 2002.

\bibitem{blanchini1994ultimate}
F.~Blanchini.
\newblock Ultimate boundedness control for uncertain discrete-time systems via
  set-induced {Lyapunov} functions.
\newblock {\em IEEE Transactions on Automatic Control}, 39(2):428--433, 1994.

\bibitem{blanchini1995nonquadratic}
F.~Blanchini.
\newblock Nonquadratic {L}yapunov functions for robust control.
\newblock {\em Automatica}, 31(3):451--461, 1995.

\bibitem{blanchini1995constrained}
F.~Blanchini, F.~Mesquine, and S.~Miani.
\newblock Constrained stabilization with an assigned initial condition set.
\newblock {\em International Journal of Control}, 62(3):601--617, 1995.

\bibitem{blanchini2015set}
F.~Blanchini and S.~Miani.
\newblock {\em Set-theoretic methods in control}.
\newblock Birkh{\"a}user, 2015.

\bibitem{blanco2010efficient}
T.~B. Blanco, M.~Cannon, and B.~De~Moor.
\newblock On efficient computation of low-complexity controlled invariant sets
  for uncertain linear systems.
\newblock {\em International Journal of Control}, 83(7):1339--1346, 2010.

\bibitem{borrelli2017predictive}
F.~Borrelli, A.~Bemporad, and M.~Morari.
\newblock {\em Predictive control for linear and hybrid systems}.
\newblock Cambridge University Press, 2017.

\bibitem{NEURIPS2018_be53d253}
R.~R. Bunel, I.~Turkaslan, P.~Torr, P.~Kohli, and P.~K. Mudigonda.
\newblock A unified view of piecewise linear neural network verification.
\newblock In {\em Advances in Neural Information Processing Systems},
  volume~31, pages 4795--4804, 2018.

\bibitem{cheng2009simplified}
L.~Cheng, Z.-G. Hou, and M.~Tan.
\newblock A simplified neural network for linear matrix inequality problems.
\newblock {\em Neural Processing Letters}, 29(3):213--230, 2009.

\bibitem{chua1988canonical}
L.~O. Chua and A.-C. Deng.
\newblock Canonical piecewise-linear representation.
\newblock {\em IEEE Transactions on Circuits and Systems}, 35(1):101--111,
  1988.

\bibitem{cybenko1989approximation}
G.~Cybenko.
\newblock Approximation by superpositions of a sigmoidal function.
\newblock {\em Mathematics of Control, Signals and Systems}, 2(4):303--314,
  1989.

\bibitem{de2010triangulations}
J.~De~Loera, J.~Rambau, and F.~Santos.
\newblock {\em Triangulations: {S}tructures for algorithms and applications},
  volume~25.
\newblock Springer Science \& Business Media, 2010.

\bibitem{fabiani2021reliably}
F.~Fabiani and P.~J. Goulart.
\newblock Reliably-stabilizing piecewise-affine neural network controllers.
\newblock {\em IEEE Transactions on Automatic Control}, 68(9):5201--5215, 2023.

\bibitem{9301422}
M.~Fazlyab, M.~Morari, and G.~J. Pappas.
\newblock Safety verification and robustness analysis of neural networks via
  quadratic constraints and semidefinite programming.
\newblock {\em IEEE Transactions on Automatic Control}, 67(1):1--15, 2022.

\bibitem{fischetti2018deep}
M.~Fischetti and J.~Jo.
\newblock Deep neural networks and mixed integer linear optimization.
\newblock {\em Constraints}, 23(3):296--309, 2018.

\bibitem{fletcher2013practical}
R.~Fletcher.
\newblock {\em Practical methods of optimization}.
\newblock John Wiley \& Sons, 2013.

\bibitem{Goodfellow-et-al-2016}
I.~Goodfellow, Y.~Bengio, and A.~Courville.
\newblock {\em Deep Learning}.
\newblock MIT Press, 2016.

\bibitem{gorokhovik1994pwa}
V.~V. Gorokhovik, O.~I. Zorko, and G.~Birkhoff.
\newblock Piecewise affine functions and polyhedral sets.
\newblock {\em Optimization}, 31(3):209--221, 1994.

\bibitem{guo2013zhang}
D.~Guo and Y.~Zhang.
\newblock Zhang neural network for online solution of time-varying linear
  matrix inequality aided with an equality conversion.
\newblock {\em IEEE Transactions on Neural Networks and Learning Systems},
  25(2):370--382, 2013.

\bibitem{gurobi}
{Gurobi Optimization, LLC}.
\newblock {Gurobi Optimizer Reference Manual}, 2021.

\bibitem{gutman1986admissible}
P.-O. Gutman and M.~Cwikel.
\newblock Admissible sets and feedback control for discrete-time linear
  dynamical systems with bounded controls and states.
\newblock {\em IEEE Transactions on Automatic Control}, 31(4):373--376, 1986.

\bibitem{hagan1997neural}
M.~T. Hagan, H.~B. Demuth, and M.~Beale.
\newblock {\em Neural network design}.
\newblock PWS Publishing Co., 1997.

\bibitem{ICLR16_hausknecht}
M.~Hausknecht and P.~Stone.
\newblock Deep reinforcement learning in parameterized action space.
\newblock In {\em Proceedings of the International Conference on Learning
  Representations (ICLR)}, 2016.

\bibitem{hayakawa2008neural}
T.~Hayakawa, W.~M. Haddad, and N.~Hovakimyan.
\newblock Neural network adaptive control for a class of nonlinear uncertain
  dynamical systems with asymptotic stability guarantees.
\newblock {\em IEEE Transactions on Neural Networks}, 19(1):80--89, 2008.

\bibitem{heemels2001equivalence}
W.~P. M.~H. Heemels, B.~De~Schutter, and A.~Bemporad.
\newblock Equivalence of hybrid dynamical models.
\newblock {\em Automatica}, 37(7):1085--1091, 2001.

\bibitem{herceg2013multi}
M.~Herceg, M.~Kvasnica, C.~N. Jones, and M.~Morari.
\newblock Multi-{P}arametric {T}oolbox 3.0.
\newblock In {\em 2013 European Control Conference (ECC)}, pages 502--510.
  IEEE, 2013.

\bibitem{hornik1989multilayer}
K.~Hornik, M.~Stinchcombe, and H.~White.
\newblock Multilayer feedforward networks are universal approximators.
\newblock {\em Neural Networks}, 2(5):359--366, 1989.

\bibitem{jordan2020exactly}
M.~Jordan and A.~G. Dimakis.
\newblock Exactly computing the local {L}ipschitz constant of {ReLU} networks.
\newblock In {\em Advances in Neural Information Processing Systems},
  volume~33, pages 7344--7353, 2020.

\bibitem{kahlert1992complete}
C.~Kahlert and L.~O. Chua.
\newblock The complete canonical piecewise-linear representation -- {P}art {I}:
  {T}he geometry of the domain space.
\newblock {\em IEEE Transactions on Circuits and Systems I: Fundamental Theory
  and Applications}, 39(3):222--236, 1992.

\bibitem{karg2020stability}
B.~Karg and S.~Lucia.
\newblock Stability and feasibility of neural network-based controllers via
  output range analysis.
\newblock In {\em 2020 59th IEEE Conference on Decision and Control (CDC)},
  pages 4947--4954. IEEE, 2020.

\bibitem{karg2022guaranteed}
B.~Karg and S.~Lucia.
\newblock Guaranteed safe control of systems with parametric uncertainties via
  neural network controllers.
\newblock In {\em 2022 IEEE 61st Conference on Decision and Control (CDC)},
  pages 7302--7308. IEEE, 2022.

\bibitem{le2013robust}
X.~Le and J.~Wang.
\newblock Robust pole assignment for synthesizing feedback control systems
  using recurrent neural networks.
\newblock {\em IEEE Transactions on Neural Networks and Learning Systems},
  25(2):383--393, 2013.

\bibitem{leyva2021stabilization}
H.~Leyva, B.~Aguirre-Hern{\'a}ndez, and J.~F. Espinoza.
\newblock Stabilization of affine systems with polytopic control value sets.
\newblock {\em Journal of Dynamical and Control Systems}, pages 1--13, 2023.

\bibitem{liu2013neural}
D.~Liu and Y.~Huang.
\newblock Neural-network-based online optimal control for uncertain non-linear
  continuous-time systems with control constraints.
\newblock {\em IET Control Theory \& Applications}, 7(17):2037--2047, 2013.

\bibitem{markolf2021polytopic}
L.~Markolf and O.~Stursberg.
\newblock Polytopic input constraints in learning-based optimal control using
  neural networks.
\newblock In {\em 2021 European Control Conference (ECC)}, pages 1018--1023.
  IEEE, 2021.

\bibitem{mirman2018differentiable}
M.~Mirman, T.~Gehr, and M.~Vechev.
\newblock Differentiable abstract interpretation for provably robust neural
  networks.
\newblock In {\em International Conference on Machine Learning}, pages
  3578--3586. PMLR, 2018.

\bibitem{montufar2022sharp}
G.~Mont{\'u}far, Y.~Ren, and L.~Zhang.
\newblock Sharp bounds for the number of regions of maxout networks and
  vertices of {M}inkowski sums.
\newblock {\em SIAM Journal on Applied Algebra and Geometry}, 6(4):618--649,
  2022.

\bibitem{montufar2014number}
G.~F. Montufar, R.~Pascanu, K.~Cho, and Y.~Bengio.
\newblock On the number of linear regions of deep neural networks.
\newblock {\em Advances in Neural Information Processing Systems},
  27:2924--2932, 2014.

\bibitem{MurJoh97}
R.~Murray-Smith and T.~A. Johansen.
\newblock {\em Multiple model approaches to modelling and control}.
\newblock Taylor \& Francis, 1997.

\bibitem{neyshabur2018the}
B.~Neyshabur, Z.~Li, S.~Bhojanapalli, Y.~LeCun, and N.~Srebro.
\newblock The role of over-parametrization in generalization of neural
  networks.
\newblock In {\em 7th International Conference on Learning Representations,
  {ICLR} 2019}. OpenReview.net, 2019.

\bibitem{nguyen2012constrained}
H.-N. Nguyen.
\newblock {\em Constrained control for uncertain systems: {A}n interpolation
  based control approach.}
\newblock PhD thesis, Sup{\'e}lec, 2012.

\bibitem{nguyen2013implicit}
H.-N. Nguyen, P.-O. Gutman, S.~Olaru, and M.~Hovd.
\newblock Implicit improved vertex control for uncertain, time-varying linear
  discrete-time systems with state and control constraints.
\newblock {\em Automatica}, 49(9):2754--2759, 2013.

\bibitem{nguyen2016explicit}
H.-N. Nguyen, P.-O. Gutman, S.~Olaru, and M.~Hovd.
\newblock Explicit improved vertex control for uncertain, time-varying linear
  discrete-time systems with state and control constraints.
\newblock {\em International Journal of Robust and Nonlinear Control},
  26(12):2652--2667, 2016.

\bibitem{petersen2018optimal}
P.~Petersen and F.~Voigtlaender.
\newblock Optimal approximation of piecewise smooth functions using deep {ReLU}
  neural networks.
\newblock {\em Neural Networks}, 108:296--330, 2018.

\bibitem{pluymers2005efficient}
B.~Pluymers, J.~A. Rossiter, J.~A.~K. Suykens, and B.~De~Moor.
\newblock The efficient computation of polyhedral invariant sets for linear
  systems with polytopic uncertainty.
\newblock In {\em Proceedings of the 2005, American Control Conference, 2005.},
  pages 804--809. IEEE, 2005.

\bibitem{rockafellar2009variational}
R.~T. Rockafellar and R.~J.-B. Wets.
\newblock {\em Variational analysis}, volume 317.
\newblock Springer Science \& Business Media, 2009.

\bibitem{schindler2020real}
T.~Schindler and A.~Dietz.
\newblock Real-time inference of neural networks on {FPGAs} for motor control
  applications.
\newblock In {\em 2020 10th International Electric Drives Production Conference
  (EDPC)}, pages 1--6. IEEE, 2020.

\bibitem{sznaier1992analog}
M.~Sznaier and M.~J. Damborg.
\newblock An analog ``neural net'' based suboptimal controller for constrained
  discrete-time linear systems.
\newblock {\em Automatica}, 28(1):139--144, 1992.

\bibitem{wang1996multilayer}
J.~Wang and G.~Wu.
\newblock A multilayer recurrent neural network for on-line synthesis of
  minimum-norm linear feedback control systems via pole assignment.
\newblock {\em Automatica}, 32(3):435--442, 1996.

\bibitem{wong2018provable}
E.~Wong and Z.~Kolter.
\newblock Provable defenses against adversarial examples via the convex outer
  adversarial polytope.
\newblock In {\em International Conference on Machine Learning}, pages
  5286--5295. PMLR, 2018.

\bibitem{zhang2019real}
L.~Zhang, G.~Wang, and G.~B. Giannakis.
\newblock Real-time power system state estimation and forecasting via deep
  unrolled neural networks.
\newblock {\em IEEE Transactions on Signal Processing}, 67(15):4069--4077,
  2019.

\bibitem{zhang2020near}
X.~Zhang, M.~Bujarbaruah, and F.~Borrelli.
\newblock Near-optimal rapid {MPC} using neural networks: {A} primal-dual
  policy learning framework.
\newblock {\em IEEE Transactions on Control Systems Technology},
  29(5):2102--2114, 2021.

\end{thebibliography}


%

\end{document}